\newtheorem{theorem}{Theorem}
\newtheorem{lemma}[theorem]{Lemma}
\newtheorem{corollary}[theorem]{Corollary}
\newtheorem{observation}[theorem]{Observation}
\begin{document}
\title{Exact Algorithms for Weighted and Unweighted Borda Manipulation Problems}
\numberofauthors{2}
\author{
\alignauthor
Yongjie Yang\titlenote{Supported by the DFG Cluster of Excellence (MMCI) and the China Scholarship Council (CSC).}\\
\affaddr{Universit\"{a}t des Saarlandes}\\
\affaddr{Saarbr\"{u}cken, Germany}\\
\email{yyongjie@mmci.uni-saarland.de}
\alignauthor
Jiong Guo\titlenote{Supported by the DFG Cluster of Excellence (MMCI).}\\
\affaddr{Universit\"{a}t des Saarlandes}\\
\affaddr{Saarbr\"{u}cken, Germany}\\
\email{jguo@mmci.uni-saarland.de}
}
\maketitle

\begin{abstract}
Both weighted and unweighted Borda manipulation problems have been proved $\mathcal{NP}$-hard.
However, there is no exact combinatorial algorithm known for these problems.
In this paper, we initiate the study of exact combinatorial algorithms for both weighted and unweighted Borda manipulation problems.
More precisely, we propose $O^*((m\cdot 2^m)^{t+1})$~-~time and $O^*(t^{2m})$~-~time\footnote{$O^*()$ is the $O()$ notation with suppressed factors polynomial in the size of the input.} combinatorial algorithms for weighted and unweighted Borda manipulation problems, respectively, where $t$ is the number of manipulators and $m$ is the number of candidates. Thus, for $t=2$ we solve one of the open problems posted by Betzler et al. [IJCAI 2011]. As a byproduct of our results, we show that the {{unweighted Borda manipulation}} problem admits an algorithm of running time $O^*(2^{9m^2\log{m}})$, based on an integer linear programming technique. Finally, we study the {{unweighted Borda manipulation}} problem under single-peaked elections and present polynomial-time algorithms for the problem in the case of two manipulators, in contrast to the $\mathcal{NP}$-hardness of this case in general settings.
\end{abstract}
\section*{Categories and Subject Descriptors}
F.2 [\textbf{Theory of Computation}]: Analysis of Algorithms and Problem Complexity;
G.2.1 [\textbf{Combinatorics}]: Combinatorial algorithms;
J.4 [\textbf{Computer Applications}]: Social Choice and Behavioral Sciences
\section*{General Terms}
Algorithms
\section*{Keywords}
voting systems, Borda manipulation, exact combinatorial algorithm, single-peaked election
\begingroup
\let\clearpage\relax
\section{Introduction}
Voting systems have many applications in a variety of areas, including political election, web spam reduction, multiagent planning, etc. The Borda system, proposed by Jean-Charles de Borda in 1781~\cite{borda}, is one of the most significant voting systems. It is the prototype of scoring systems and many other voting systems. 
The Borda system has been used for
selecting presidential election candidates in some of the Pacific island countries such as Nauru and Kiribati. It also has been shown that the Borda system is a powerful technique for pattern recognition.

Certain issues which have been attracting much attention in voting systems are the strategic behaviors, e.g., one or more than one voter to influence the outcome of the elections by doing some tricks.
By the celebrated Gibbard-Satterthwaite Theorem \cite{Gibbard73,satterthwaite1975}, every reasonable voting system with at least three candidates can be attacked by the voters with providing insincere votes. However, from the viewpoint of complexity theory, if it is $\mathcal{NP}$-hard to determine how to influence the election, one may give up his attacking to the election. From this point, computational complexity could be a reasonable way to protect elections from being attacked. The first study in this direction was conducted by Bartholdi et al. in their seminal paper~\cite{BARTHOLDI89}. Since then, researches on computational complexity of strategic behaviors of voting systems have been opened up (See~\cite{DBLP:conf/ijcai/IanovskiYEW11,DBLP:conf/ijcai/ObraztsovaE11,DBLP:conf/atal/ErdelyiPR11,DBLP:conf/aaai/DaviesKNW11,DBLP:conf/ijcai/BetzlerNW11}~for more details).
Recently, many $\mathcal{NP}$-hard strategic behavior problems have been extensively studied from the view point of exact, exponential-time algorithms, for instance, 
manipulations~\cite{DBLP:journals/jacm/ConitzerSL07}, bribery problems \cite{DBLP:journals/algorithmica/DornS12}, control problems \cite{DBLP:journals/jair/FaliszewskiHHR09}, etc. For more recent development in this direction, we refer to the excellent survey by Betzler
et al.~\cite{DBLP:conf/birthday/BetzlerBCN12}.
We focus on deriving exact combinatorial algorithms for weighted and unweighted Borda manipulation problems, both of which have been proved $\mathcal{NP}$-hard~\cite{DBLP:conf/ijcai/BetzlerNW11,DBLP:conf/aaai/DaviesKNW11}.

\subsection{Preliminaries}
A {\textit{multiset}} $S:=\{s_1, s_2,..., s_{\scriptsize{|S|}}\}$ is a generalization of a set where objects of $S$ are allowed to appear more than one time in $S$, that is, $s_i=s_j$ is allowed for $i\neq j$. An {\textit{element}} of $S$ is one copy of some object. We use $s\in_{+} S$ to denote that $s$ is an element of $S$. The {\textit{cardinality}} of $S$ denoted by $|S|$ is the number of elements contained in $S$. For example, the cardinality of the multiset $\{1,1,1,2,3,3\}$ is $6$. For two multisets $A$ and $B$, we use $A\uplus B$ to denote the multiset containing all elements in $A$ and $B$.
For example, for $A:=\{1,1,2,3,3,4\}$ and $B:=\{1,2,3\}$, $A\uplus B:=\{1,1,1,2,2,3,3,3,4\}$.

Normally, a {\textit{voting system}} can be specified by a set $\mathcal{C}$ of {\textit{candidates}}, a multiset $\Pi_{{\mathcal{V}}}:=\{\pi_{v_1},\pi_{v_2},...,\pi_{v_n}\}$ of {\textit{votes}} casted by a corresponding set ${\mathcal{V}}:=\{{v}_1, {v}_2, \dots , {v}_{n}\}$ of {\textit{voters}}~($\pi_{v_i}$ is casted by ${v}_i$), and a {\textit{voting protocol}} which maps the {\textit{election}} $(\mathcal{C},\Pi_{\mathcal{V}},\mathcal{V})$ to a candidate $w\in \mathcal{C}$ which we call the {\textit{winner}}. Each vote $\pi_v\in_{+} \Pi_{\mathcal{V}}$ is defined by a bijection $\pi_v: \mathcal{C}\rightarrow [|\mathcal{C}|]$~(in some other literature, a vote is defined as a linear order over the candidates), where $[n]$ denotes the set $\{1,2,...,n\}$. The \textit{position} of a candidate $c$ in $\pi_v$ is the value of $\pi_v(c)$. We say a voter $v$ \textit{placing} a candidate $c$ in his/her $x$-th position or a voter $v$ {\textit{fixing}} his/her $x$-th position by the candidate $c$ if $\pi_v(c)=x$. The candidate placed in the highest, that is, the $|\mathcal{C}|$-th, position in $\pi_v$ is called the most preferred candidate of ${v}$, the candidate placed in the second-highest, that is, the $(|\mathcal{C}|-1)$-th, position in $\pi_v$ is called the second preferred candidate of ${v}$, and so on.

In the following, we use $m$ to denote the number of candidates.
A {\textit{Borda protocol}} can be defined by a vector $\langle m-1,m-2,...,0\rangle$. Each voter contributes $m-1$ score to his/her most preferred candidate, $m-2$ score to his/her second preferred candidates, and so on. The winner is a candidate who has the highest total score. Here, we break a tie randomly, that is, if there is more than one candidate having the highest score, the winner will be chosen randomly from these candidates.
In a {\textit{weighted Borda system}}, each voter ${v}$ is associated with a non-negative integer weight $f({v})$ and contributes $f({v})\cdot (m-1)$ score to his/her most preferred candidate, $f({v})\cdot (m-2)$ score to his/her second preferred candidate, and so on. Accordingly, a candidate having the highest total score wins the election. The unweighted Borda system is the special case of the weighted Borda system where each voter has the unit weight of $1$.

For a candidate $c$ and a voter ${v}$, we use $SC_{{v}}(c)$ to denote the score of $c$ which is contributed by ${v}$, that is, $SC_{{v}}(c):=f({v})\cdot(\pi_v(c)-1)$ (in an unweighted Borda system, $SC_v(c):=\pi_v(c)-1$).
Let $SC_{{\mathcal{V}}}(c)$ denote the total score of $c$ contributed by voters in $\mathcal{{V}}$, that is, $SC_{{\mathcal{V}}}(c):=\sum_{{v}\in {\mathcal{V}}}{SC_{{v}}(c)}$.

In the settings of {\textit{manipulation}}, we have, in addition to $\mathcal{V}$, a set $\mathcal{V}'$ of voters which are called {\it manipulators}. The manipulators form a coalition and desire to coordinate their votes to make a distinguished candidate win the new election with votes in $\Pi_{\mathcal{V}}\uplus \Pi_{\mathcal{V}'}$, where $\Pi_{\mathcal{V}'}$ is the multiset of votes casted by the manipulators. The formal definitions of the problems studied in this paper are as follows.
\medskip

\noindent{\textbf{Unweighted Borda Manipulation}} (UBM)\\[1mm]
{\it{Input:}} An election $(\mathcal{C}\cup \{p\},\Pi_{\mathcal{V}},\mathcal{V})$ where $p$ is not the winner, and a set $\mathcal{V'}$ of $t$ manipulators.\\[2mm]
{\it{Question:}} Can the manipulators cast their votes $\Pi_{\mathcal{V}'}$ in such a way that $p$ wins the election $(\mathcal{C}\cup \{p\},\Pi_{\mathcal{V}}\uplus \Pi_{\mathcal{V}'}, \mathcal{V}\cup\mathcal{V}')?$\\

\noindent{\textbf{Weighted Borda Manipulation}} (WBM)\\[1mm]
{\it{Input:}} A weighted election $(\mathcal{C}\cup \{p\},\Pi_{\mathcal{V}},{\mathcal{V}},f_1:{\mathcal{V}}\rightarrow \mathbb{N})$ where $p$ is not the winner, a set ${\mathcal{V'}}$ of $t$ manipulators and a weight function $f_2:{\mathcal{V'}}\rightarrow \mathbb{N}$.\\[2mm]
{\it{Question:}} Can the manipulators cast their votes $\Pi_{\mathcal{V}'}$ in such a way that $p$ wins the weighted election $(\mathcal{C}\cup \{p\},\Pi_{\mathcal{V}}\uplus \Pi_{\mathcal{V}'}',{\mathcal{V}}\cup {\mathcal{V}'},f:{\mathcal{V}}\cup {\mathcal{V}'}\rightarrow \mathbb{N})$, where $f({v})=f_1({v})$ if ${v}\in {\mathcal{V}}$ and $f({v})=f_2({v})$ otherwise?
\smallskip

Since we break ties randomly, in order to make $p$ the winner, the manipulators must assure that after the manipulation the distinguished candidate $p$ becomes the only candidate who has the highest total score among all candidates.

\subsection{Related Work and Our Contribution}
As one of the most prominent voting systems, complexity of strategic behaviors under the Borda systems has been intensively studied. It is known that many types of bribery and control behaviors under the unweighted Borda system are $\mathcal{NP}$-hard. 
For manipulation behaviors, WBM is $\mathcal{NP}$-hard even when the election contains only three candidates~\cite{DBLP:journals/jacm/ConitzerSL07}. Bartholdi et al.~\cite{BARTHOLDI89} showed that both UBM and WBM in the case of only one manipulator are polynomial-time solvable. The complexity of UBM in the case of more than one manipulator remained open for many years, until very recently it was proved $\mathcal{NP}$-hard even when restricted to the case of only two manipulators~\cite{DBLP:conf/ijcai/BetzlerNW11,DBLP:conf/aaai/DaviesKNW11}. Heuristic and approximation algorithms for UBM have been studied in the literature~\cite{DBLP:journals/ai/ZuckermanPR09,DBLP:conf/aaai/DaviesKNW11}. It is worthy to mention that Zuckerman et al.~\cite{DBLP:journals/ai/ZuckermanPR09} showed that UBM admits an approximation algorithm
which can output a success manipulation with $t+1$ manipulators whenever the given instance has a success manipulation with $t$ manipulators.
By applying the integer linear programming (ILP) techinique, UBM can be solved exactly with a very high computational complexity $O^*(m!^{O(m!)})$ \cite{DBLP:conf/ijcai/BetzlerNW11}.
However, no purely combinatorial exact algorithm seems known for these problems. In particular, Betzler et al. \cite{DBLP:conf/ijcai/BetzlerNW11} posed as an open problem whether UBM can be solved exactly with a running time single-exponentially depending on $m$ in the case of constant number of manipulators.

We propose two algorithms solving WBM and UBM in $O^*((m\cdot 2^m)^{t+1})$ time and $O^*(t^{2m})$ time, respectively, where $t$ is the number of manipulators and $m$ is the number of candidates. Both algorithms are based on dynamic programming techniques. Our results imply that both WBM and UBM can be solved in time single exponentially on $m$ in the case of constant number of manipulators, answering the open question in~\cite{DBLP:conf/ijcai/BetzlerNW11}. Additionally, we improve the running time of the ILP-based algorithms for UBM to $O^*(2^{9m^2\log{m}})$. The key here is to transfer UBM to an integer linear programming with $m^2$ variables. Furthermore, we study polynomial-time algorithms for UBM in the case of at most two manipulators under single-peaked elections~(the definition of single-peaked election is introduced in Sec.~\ref{singlepeaked}). Due to lack of space, some proofs are deferred to the long version.

\section{Algorithm for Weighted Borda Manipulation}\label{sec:wbm}
In this section, we present an exact combinatorial algorithm for WBM.
The following observation is clearly true.
\begin{observation}\label{ob1}
Every true-instance of WBM has a solution where each manipulator places the distinguished candidate $p$ in his/her highest position.
\end{observation}

Let $((\mathcal{C}\cup\{p\},\Pi_{\mathcal{V}},\mathcal{V},f_1),\mathcal{V}',f_2,t)$ be the given instance. Due to Observation~\ref{ob1}, there must be a solution $\Pi_{\mathcal{V}'}$ with $SC_{\mathcal{V}\cup \mathcal{V}'}(p):=SC_{\mathcal{V}}(p)+\sum_{{v'}\in {\mathcal{V}'}}f({v'})\cdot |\mathcal{C}|$ if this instance is true. Therefore, to make $p$ the winner,
$SC_{\mathcal{V}'}(c)\leq g(c)$ should be satisfied for all $c\in \mathcal{C}$, where $g(c):=SC_{\mathcal{V}}(p)+\sum_{{v'}\in {\mathcal{V}'}}f({v'})\cdot |\mathcal{C}|-SC_{\mathcal{V}}(c)-1$. The value of $g(c)$ is called the {\it{capacity}} of $c$. Meanwhile, if in the given instance there is a candidate $c$ with $g(c)<0$, then the given instance must be a false-instance. 
Therefore, in the following, we assume that the given instance contains no candidate $c$ with $g(c)<0$. Based on these, we can reformulate WBM as follows:
\medskip

\noindent\textbf{{Reformulation of WBM}}\\[1mm]
{\it{Input:}} A set $\mathcal{C}$ of candidates associated with a capacity function $g: \mathcal{C}\rightarrow \mathbb{N}$, and a multiset $F:=\{f_1,f_2,...,f_t\}$ of non-negative integers.\\[2mm]
{\it{Question:}} Is there a multiset ${\Pi}:=\{\pi_1,\pi_2,...,\pi_{t}\}$ of bijections mapping from $\mathcal{C}$ to $[|\mathcal{C}|]$ such that $\sum_{i=1}^{t}{f_i\cdot (\pi_i(c)-1)\leq g(c)}$ holds for all $c\in \mathcal{C}$?
\medskip

Here, the bijection $\pi_i$ corresponds to the vote casted by the $i$-th manipulator and $f_i\in_+ F$ corresponds to the weight of the $i$-th manipulator (suppose that a fixed ordering over the manipulators is given).

Our algorithm is based on a dynamic programming method which is associated with a boolean dynamic table defined as $DT(C,Z_1,Z_2,...,Z_t)$, where $C\subseteq \mathcal{C}$ is a subset of candidates, $Z_i\subseteq [|\mathcal{C}|]$ and $|C|=|Z_i|$ for all $i\in [t]$. Here, each $Z_i$ encodes the positions that are occupied by the candidates of $C$ in the vote casted by the $i$-th manipulator. The entry $DT(C,Z_1,Z_2,...,Z_t)=1$ means that there is a multiset ${\Pi}=\{\pi_1,\pi_2,...,\pi_{t}\}$ of bijections mapping from $\mathcal{C}$ to $[|\mathcal{C}|]$ such that for each $i\in [t],\, \bigcup_{c\in C}\{\pi_i(c)\}=Z_i$ and, for every candidate $c\in C$, $c$ is ``safe'' under $\Pi$. Here, we say a candidate $c$ is safe under $\Pi$, if $\sum_{i=1}^{t}{f_i\cdot (\pi_i(c)-1)\leq g(c)}$. Intuitively, $DT(C,Z_1,Z_2,...,Z_t)=1$ means that we can place all candidates of $C$ in the positions encoded by $Z_i$ for all $i\in [t]$ without exceeding the capacity of any $c\in C$. Clearly, a given instance of WBM is a true-instance if and only if $DT(\mathcal{C},Z_1:=[|\mathcal{C}|], Z_2:=[|\mathcal{C}|],...,Z_t:=[|\mathcal{C}|])=1$.
The algorithm is as follows:
\medskip\newline
\noindent{\textbf{Initialization:}} For all $c\in \mathcal{C}$ and $z_1,z_2,...,z_t\in [|\mathcal{C}|]$,
if $\sum_{i=1}^{t}{f_i\cdot (z_i-1)\leq g(c)}$, then $DT(\{c\},\{z_1\},\{z_2\},...,\{z_t\})=1$; otherwise, $DT(\{c\},\{z_1\},\{z_2\},...,\{z_t\})=0$.
\smallskip\newline
\noindent{\textbf{Updating}}: For each $l$ from 2 to $|\mathcal{C}|$, we update the entries $DT(C,Z_1,Z_2,...,Z_t)$ with {${|C|=|Z_1|=|Z_2|=...=|Z_t|=l}$} as follows:
 if $\exists c\in C$ and $\exists z_i\in Z_i~\text{for all}~i\in [t]$ such that $DT(C\setminus\{c\},Z_1\setminus\{z_1\},Z_2\setminus\{z_2\},...,Z_t\setminus\{z_t\})=1$ and $DT(\{c\},\{z_1\},\{z_2\},...,\{z_t\})=1$, 
then $DT(C,Z_1,Z_2,...,Z_t)=1$, otherwise, $DT(C,Z_1,Z_2,...,Z_t)=0$.

\begin{theorem}\label{theorem:wb}
WBM is solvable in $O^*((|\mathcal{C}|\cdot 2^{|\mathcal{C}|})^{t+1})$ time.
\end{theorem}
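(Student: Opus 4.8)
The plan is to establish correctness of the dynamic programming recurrence and then bound its running time. Correctness amounts to proving, by induction on $l=|C|$, that $DT(C,Z_1,\dots,Z_t)=1$ if and only if there is a multiset $\Pi=\{\pi_1,\dots,\pi_t\}$ of bijections from $\mathcal{C}$ to $[|\mathcal{C}|]$ with $\bigcup_{c\in C}\{\pi_i(c)\}=Z_i$ for every $i\in[t]$ and every $c\in C$ safe under $\Pi$. The base case $l=1$ is exactly the Initialization step. For the inductive step, the ``if'' direction picks any $c\in C$; setting $z_i:=\pi_i(c)$ witnesses that both $DT(C\setminus\{c\},Z_1\setminus\{z_1\},\dots,Z_t\setminus\{z_t\})=1$ (restrict $\Pi$ to $C\setminus\{c\}$) and $DT(\{c\},\{z_1\},\dots,\{z_t\})=1$, so the recurrence sets the entry to $1$. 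For the ``only if'' direction, if the recurrence fires for some $c$ and some $z_i\in Z_i$, then by induction there is a $\Pi'=\{\pi_1',\dots,\pi_t'\}$ handling $C\setminus\{c\}$ on positions $Z_i\setminus\{z_i\}$, and since $z_i$ is the unique element of $[|\mathcal{C}|]\setminus(Z_i\setminus\{z_i\})$ that is missing, we can extend each $\pi_i'$ by sending $c$ to $z_i$ to obtain genuine bijections; safety of $c$ follows from $DT(\{c\},\{z_1\},\dots,\{z_t\})=1$, and safety of the other candidates is inherited. Combined with the reformulation of WBM and the remark that $DT(\mathcal{C},[|\mathcal{C}|],\dots,[|\mathcal{C}|])=1$ characterizes true-instances, this proves correctness.

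For the running time, I would count table entries and the cost per update. There are at most $2^{|\mathcal{C}|}$ choices for $C$ and, for each $i\in[t]$, at most $2^{|\mathcal{C}|}$ choices for $Z_i$, so the table has $O^*(2^{(t+1)|\mathcal{C}|})$ entries. Filling one entry during Updating requires trying each $c\in C$ (at most $|\mathcal{C}|$ choices) and, for each $i$, each $z_i\in Z_i$ (at most $|\mathcal{C}|$ choices per coordinate), i.e.\ at most $|\mathcal{C}|\cdot|\mathcal{C}|^t$ combinations, each checkable in polynomial time by two table lookups. Hence the total time is $O^*(2^{(t+1)|\mathcal{C}|}\cdot|\mathcal{C}|^{t+1})=O^*((|\mathcal{C}|\cdot 2^{|\mathcal{C}|})^{t+1})$, with the polynomial overhead (computing $g$, the initialization sweep over $c$ and $z_1,\dots,z_t$, which costs $O^*(|\mathcal{C}|^{t+1})$) absorbed into $O^*()$. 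I would also note that the reformulation itself is computed in polynomial time from the original WBM instance, including detecting and rejecting instances with a negative capacity, so the bound transfers to WBM.

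The step that needs the most care is the ``only if'' direction of the inductive argument: one must verify that appending $c\mapsto z_i$ to the bijection $\pi_i'$ on $\mathcal{C}\setminus$(something) really produces a bijection on all of $\mathcal{C}$ — this uses that $|Z_i\setminus\{z_i\}|=l-1$ and that the inductive hypothesis already guarantees $\pi_i'$ is a bijection of $\mathcal{C}$ whose image on $C\setminus\{c\}$ is exactly $Z_i\setminus\{z_i\}$, so $z_i$ is the image of some candidate outside $C$, and we may just relabel which candidate maps to $z_i$ without disturbing safety of candidates in $C\setminus\{c\}$ (their positions are unchanged). Everything else — the base case, the ``if'' direction, and the running-time arithmetic — is routine, so I would present those briefly and spend the bulk of the argument on making the bijection-extension bookkeeping precise.
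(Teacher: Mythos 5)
Your proposal is exactly the paper's dynamic program with the same entry-count-times-update-cost accounting, so it takes essentially the same approach; the only difference is that you spell out the inductive correctness argument (in particular the relabeling step that extends $\pi_i'$ to a bijection sending $c$ to $z_i$ without disturbing the safety of $C\setminus\{c\}$), which the paper dismisses with ``correctness directly follows from the meaning of the dynamic table.'' One minor phrasing slip: $z_i$ is not the \emph{unique} element of $[|\mathcal{C}|]\setminus(Z_i\setminus\{z_i\})$ when $|C|<|\mathcal{C}|$, but your subsequent swap argument does not rely on that claim, so nothing breaks.
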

\begin{proof}
We consider the above algorithm for WBM. In the initialization, we check whether $\sum_{i=1}^{t}{f_i\cdot (z_i-1)\leq g(c)}$ for each candidate $c\in \mathcal{C}$ and each encoded position $z_i\in [|\mathcal{C}|]$ for each $i\in [t]$. Since there are $|\mathcal{C}|$
many candidates and $|\mathcal{C}|$ many positions to be considered for each $z_i$, the running time of the initialization is bounded by $O^*(|\mathcal{C}|^{t+1})$.
In the recurrence, we compute $DT(C,Z_1,Z_2,...,Z_t)$ for all $C\subseteq \mathcal{C}$ and all ${Z_1\subseteq [|\mathcal{C}|]}, {Z_2\subseteq [|\mathcal{C}|]},...,Z_t\subseteq [|\mathcal{C}|]$ with $|C|=|Z_1|=|Z_2|=...=|Z_t|=l$, where $2\leq l\leq m$. To compute each of them, we consider all possibilities of $c\in C$ and ${z_1\in Z_1},{z_2\in Z_2},...,{z_t\in Z_t}$. For each possibility, we further check whether $DT({C\setminus\{c\}},{Z_1\setminus\{z_1\}}, Z_2\setminus\{z_2\},...,Z_t\setminus\{z_t\})=1$ and $DT(\{c\},\{z_1\},\{z_2\},...,\{z_t\})=1$. 
Since there are at most $|\mathcal{C}|^{t+1}$ such possibilities, and there are at most $2^{(t+1)|\mathcal{C}|}$ entries needed to be computed, we arrive at the total running time of ${O^*((|\mathcal{C}|\cdot 2^{|\mathcal{C}|})^{t+1})}$.

The correctness directly follows from the meaning of the dynamic table we defined.
\end{proof}

In~\cite{DBLP:conf/ijcai/BetzlerNW11}, Betzler et al. posed as an open question whether UBM in case of two manipulators can be solved in less than $O^*(|\mathcal{C}|!)$ time. By Theorem~\ref{theorem:wb}, we can answer this question affirmatively.
\begin{corollary}
WBM~(UBM is a special case of WBM) in case of two manipulators can be solved in $O^*(8^{|\mathcal{C}|})$ time.
\end{corollary}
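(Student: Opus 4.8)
The plan is simply to instantiate Theorem~\ref{theorem:wb} at $t=2$. Substituting into the running time bound $O^*((|\mathcal{C}|\cdot 2^{|\mathcal{C}|})^{t+1})$ gives
\[
O^*\!\left((|\mathcal{C}|\cdot 2^{|\mathcal{C}|})^{3}\right) = O^*\!\left(|\mathcal{C}|^{3}\cdot 2^{3|\mathcal{C}|}\right) = O^*\!\left(|\mathcal{C}|^{3}\cdot 8^{|\mathcal{C}|}\right).
\]
Since the input to WBM (and hence to UBM) explicitly lists the $|\mathcal{C}|$ candidates, the factor $|\mathcal{C}|^{3}$ is polynomial in the input size and is therefore absorbed by the $O^*$ notation, leaving $O^*(8^{|\mathcal{C}|})$.

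It remains only to observe that UBM is covered by the same bound. As noted in the preliminaries, the unweighted Borda system is the special case of the weighted Borda system in which every voter (manipulator included) has weight $1$; hence every UBM instance is, verbatim, a WBM instance, and running the algorithm of Theorem~\ref{theorem:wb} on it solves UBM within the same time bound.

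I do not anticipate any genuine obstacle: the statement is a direct corollary obtained by a single substitution together with the standard convention that polynomial factors are suppressed by $O^*$. The only point worth a quick sanity check is that no hidden super-polynomial dependence on a parameter other than $|\mathcal{C}|$ appears once $t$ is fixed to the constant $2$; inspecting the proof of Theorem~\ref{theorem:wb}, every factor there is a power of $|\mathcal{C}|$ times a power of $2^{|\mathcal{C}|}$, so setting $t=2$ leaves exactly $|\mathcal{C}|^{3}\cdot 8^{|\mathcal{C}|}$ up to polynomial factors, as claimed.
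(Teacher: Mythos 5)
Your proposal is correct and matches the paper's reasoning exactly: the corollary is stated as an immediate instantiation of Theorem~\ref{theorem:wb} at $t=2$, with the polynomial factor $|\mathcal{C}|^3$ absorbed into the $O^*$ notation and UBM handled as the unit-weight special case of WBM. Nothing further is needed.
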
 
\section{Algorithm for Unweighted Borda Manipulation}
In this section, we study the UBM problem. Recall that UBM is a special case of WBM where all voters have the same unit weight. However, compared to the weighted version, when we compute $SC_{\mathcal{V}'}(c)$ for a candidate $c$, it is irrelevant which manipulators placed $c$ in the $j$-th positions. The decisive factor is the number of manipulators placing $c$ in the $j$-th positions. This leads to the following approach where we firstly reduce UBM to a matrix problem and then solve this matrix problem by a dynamic programming technique, resulting in a better running time than in Corollary 3.
Firstly, the matrix problem is defined as follows.
\medskip

\noindent{\textbf{Filling Magic Matrix}} (FMM)\\[1mm]
{\it{Input:}} A multiset $g:=\{g_{_1},g_{_2},...,g_m\}$ of non-negative integers and an integer $t>0$.\\[2mm]
{\it{Question:}} Is there an $m\times m$ matrix $M$ with non-negative integers such that $\forall i\in [m],\;\sum_{j=1}^{m}{(j-1)\cdot M[i][j]}\leq g_i$ and $\sum_{j=1}^{m}{M[i][j]}=t$, and $\forall j\in [m],\, \sum_{i=1}^{m}{M[i][j]}=t?$
\medskip

In the following, we present an algorithm for FMM. The algorithm is based on a dynamic programming method associated with a boolean dynamic table $DT(l,T)$, where $l\in [m]$ and $T:=\{T_j\in \mathbb{N}\mid j\in [m], T_j\leq t\}$ is a multiset. The entry $DT(l,T)=1$ means that there is an $m\times m$ matrix $M$ such that: (1) $\sum_{j=1}^{m}M[i][j]= t$ for all $i\in [l]$; (2) $\sum_{i=1}^{l}M[i][j]=T_j$ for all $j\in [m]$; and (3) $\sum_{j=1}^{m}(j-1)\cdot M[i][j]\leq g_i$ for all $i\in [l]$. It is clear that a given instance of FMM is a true-instance if and only if $DT(m,T_{[m]})=1$, where $T_{[m]}$ is the multiset containing $m$ copies of $t$.
The algorithm for solving FMM is as follows.
\medskip\newline
\noindent{\textbf{Initialization:}} For each possible multisets $T:=\{T_j\in \mathbb{N}\mid j\in [m], T_j\leq t\}$, if $\sum_{j=1}^{m}{T_j}=t$ and $\sum_{j=1}^{m}{(j-1)\cdot T_j}\leq g_{_1}$, then set $DT(1,T)=1$; otherwise, set $DT(1,T)=0$.
\smallskip\newline
\noindent{\textbf{Updating:}} For each $l$ from 2 to $t$ and each possible multiset $T:=\{T_j\in \mathbb{N}\mid j\in [m], T_j\leq t\}$, if there is a multiset $T':=\{T'_j\in \mathbb{N}\mid j\in [m], T'_j\leq T_j\}$ such that $DT(l-1,T')=1$, $\sum_{j=1}^{m}{(T_j-T'_j)}=t$ and $\sum_{j=1}^{m}{(j-1)\cdot (T_j-T'_j)}\leq g_{_l}$, then set $DT(l,T)=1$; otherwise, set $DT(l,T)=0$.

\begin{lemma}\label{FMM}
FMM is solvable in $O^*(t^{2m})$ time.
\end{lemma}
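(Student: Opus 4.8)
The plan is to analyze the dynamic programming algorithm for FMM just described and bound its running time. I would proceed in three stages: first establish the semantic invariant of the table $DT(l,T)$, then verify correctness (both directions), and finally count the entries and the per-entry work.

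First I would argue correctness by induction on $l$. The base case $l=1$ is immediate from the initialization: $DT(1,T)=1$ exactly when the multiset $T$ (read as the first row of $M$, i.e. $M[1][j]=T_j$) has row sum $t$ and satisfies the capacity constraint $\sum_{j}(j-1)T_j\le g_1$. For the inductive step, I would show that $DT(l,T)=1$ if and only if there is a valid filling of the first $l$ rows whose column-prefix-sum vector equals $T$. The ``if'' direction: given such a filling, let $T'_j=\sum_{i=1}^{l-1}M[i][j]$; then $T'$ is a valid choice in the update rule (the first $l-1$ rows witness $DT(l-1,T')=1$ by induction, the $l$-th row has entries $T_j-T'_j\ge 0$ summing to $t$, and satisfies the capacity bound for $g_l$), so $DT(l,T)$ is set to $1$. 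The ``only if'' direction: if the update rule fires, take the matrix witnessing $DT(l-1,T')=1$, append the row $(T_j-T'_j)_{j\in[m]}$, and check the three defining conditions of $DT(l,T)$ hold. Setting $l=m$, $T=T_{[m]}$ recovers exactly the FMM question, since the final column sums being $t$ is precisely $T_j=t$ for all $j$. I should also note the harmless typo that the update loop should run $l$ from $2$ to $m$ (not $t$), since we are filling $m$ rows.

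Next I would count. A multiset $T=\{T_j\in\mathbb{N}\mid j\in[m],\ T_j\le t\}$ is really an $m$-tuple $(T_1,\dots,T_m)$ with each coordinate in $\{0,1,\dots,t\}$, so there are at most $(t+1)^m$ such tuples; together with the $m$ choices of $l$, the table has $O^*((t+1)^m)=O^*(t^m)$ entries (the factor $m$ and any polynomial overhead are absorbed by $O^*$). To fill one entry $DT(l,T)$ in the update step we range over all candidate vectors $T'$ with $0\le T'_j\le T_j\le t$; there are at most $(t+1)^m$ of these, and for each we perform a polynomial-time check of the two sum conditions and the capacity inequality, plus a table lookup of $DT(l-1,T')$. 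Hence each entry costs $O^*(t^m)$, and the initialization is cheaper. Multiplying, the total running time is $O^*(t^m\cdot t^m)=O^*(t^{2m})$, which is the claimed bound.

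The only mildly delicate point is making sure the per-entry search over $T'$ is correctly bounded: one must use $T'_j\le T_j$ (not $T'_j\le t$) so that $T_j-T'_j\ge 0$, but either way the count is at most $(t+1)^m$, so the bound is unaffected; this is the step I would write out most carefully, though it is still routine. Everything else is bookkeeping, and the correctness induction is the substantive-but-standard part. I would conclude by invoking the reduction from UBM to FMM (with $m=|\mathcal{C}|$ and the capacities $g_i$ derived from the $SC_{\mathcal{V}}$ values exactly as in the WBM reformulation) to transfer this running time to UBM, matching the $O^*(t^{2m})$ stated in the introduction.
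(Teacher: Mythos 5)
Your proof is correct and follows essentially the same route as the paper's: bound the number of table entries by $O^*(t^m)$ and the per-entry work (ranging over all admissible $T'$) by $O^*(t^m)$, giving $O^*(t^{2m})$ overall. You additionally spell out the correctness induction that the paper leaves implicit (``follows from the meaning of the dynamic table''), and you correctly note that the update loop should run $l$ from $2$ to $m$ rather than to $t$ --- a typo in the paper's algorithm description that its own analysis confirms.
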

\begin{proof}
In the initialization, we consider all possible multisets $T:=\{T_j\in \mathbb{N}\mid j\in [m], T_j\leq t\}$ with $\sum_{j=1}^{m}{T_j}=t$ and $\sum_{j=1}^{m}{(j-1)\cdot T_j}\leq g_{_1}$. Since $T$ has at most $t^m$ possibilities, the running time of the initialization is bounded by $O^*(t^m)$. In the recurrence, we use a loop indicated by a variable $l$ with $2\leq l \leq m$ to update $DT(l, T)$. In each loop we compute the values of the entries $DT(l,T)$ for all multisets $T:=\{T_j\in \mathbb{N}\mid j\in [m], T_j\leq t\}$. To compute each of the entries, we check whether $DT(l-1,T')=1$, $\sum_{j=1}^{m}{(T_j-T'_j)}=t$ and $\sum_{j=1}^{m}{(j-1)\cdot (T_j-T'_j)}\leq g_{_l}$ for all possible multisets $T':=\{T'_j\in \mathbb{N}\mid j\in [m], T'_j\leq T_j\}$. Since there are at most $t^m$ possible multisets $T'$, the time to compute each $DT(l,T)$ is bounded by $O^*(t^m)$. Since $T$ has at most $t^m$ possibilities, there are at most $t^m$ entries needed to be computed in each loop, implying a total time of $O^*(t^{2m})$ for recurrence procedure. In conclusion, the total time of the algorithm is $O^*(t^{2m})$.

The correctness directly follows from the meaning we defined for the dynamic table.
\end{proof}

We now come to show how to solve UBM via FMM.
A {\textit{partial vote}} is a {{partial injection}} $\pi: \mathcal{C}\cup\{p\}\rightarrow [|\mathcal{C}\cup\{p\}|]$ which maps a subset $C\subseteq \mathcal{C}\cup\{p\}$ onto $[|\mathcal{C}\cup\{p\}|]$ such that for any two distinct $a_1, a_2\in C$, $\pi(a_1)\neq \pi(a_2)$. Here, $C$ is the {\textit{domain}} and $\{\pi(a)\mid a\in C\}$ is the {\textit{codomain}} of $\pi$. A position not in the codomain is called a free position. For simplicity, we define $\pi(c)=-1$ for $c\not\in C$. 

\begin{lemma}\label{FMMtoBM}
UBM can be reduced to FMM in polynomial time.
\end{lemma}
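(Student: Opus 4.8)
The plan is to start from the reformulation of WBM already obtained in Section~\ref{sec:wbm}, specialized to unit weights: by Observation~\ref{ob1} we may assume every manipulator ranks $p$ first, so $p$ contributes exactly $t(m)$ extra points to itself (writing $m=|\mathcal{C}|$ for the non-distinguished candidates, so there are $m+1$ candidates in total), and for each $c\in\mathcal{C}$ we have a capacity $g(c)=SC_{\mathcal{V}}(p)+t\cdot(m+1-1)-SC_{\mathcal{V}}(c)-1 = SC_{\mathcal{V}}(p)+tm-SC_{\mathcal{V}}(c)-1$. If some $g(c)<0$ we output a trivial false-instance of FMM; otherwise we must decide whether the $t$ manipulators can fill their remaining $m$ positions (positions $1,\dots,m$, i.e.\ all but the top one reserved for $p$) with the $m$ candidates of $\mathcal{C}$, one bijection per manipulator, so that $\sum_{\text{manip }k} (\pi_k(c)-1)\le g(c)$ for every $c$. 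First I would argue that, since weights are all $1$, a collection of $t$ such bijections is completely described, as far as the scores of candidates in $\mathcal{C}$ go, by the integer matrix $M[i][j]:=\#\{k : \text{manipulator }k\text{ places candidate }c_i\text{ in position }j\}$, where $i,j\in[m]$ and position $j$ means the $j$-th lowest of the $m$ available slots, contributing score $j-1$.

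Next I would establish the two combinatorial constraints that characterize exactly which nonnegative integer matrices $M$ arise this way: each row sums to $t$ (each candidate is placed somewhere by each of the $t$ manipulators) and each column sums to $t$ (each of the $t$ manipulators uses each position exactly once). The key step is the converse: given any nonnegative integer $m\times m$ matrix $M$ with all row sums and all column sums equal to $t$, it can be realized by $t$ genuine bijections $\pi_1,\dots,\pi_t$ with the prescribed placement counts. This is a standard fact --- such a matrix is $t$ times a doubly stochastic matrix, hence by the Birkhoff--von Neumann theorem (or directly by a greedy/Hall's-theorem argument, peeling off one permutation matrix at a time: the support of a nonzero matrix with equal row and column sums always contains a perfect matching) it decomposes as a sum of $t$ permutation matrices, and each permutation matrix is precisely one bijection. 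I would spell out the one-permutation-peeling argument briefly since it is elementary and self-contained. Finally, the score constraint $\sum_{\text{manip }k}(\pi_k(c_i)-1)\le g(c_i)$ translates verbatim into $\sum_{j=1}^m (j-1)M[i][j]\le g(c_i)$, so setting $g_i:=g(c_i)$ gives an instance $(\{g_1,\dots,g_m\},t)$ of FMM that is a true-instance iff the original UBM instance is. All of this is clearly computable in polynomial time.

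The main obstacle --- really the only non-bookkeeping point --- is the realizability (converse) direction, i.e.\ that the row/column-sum-$t$ condition is not merely necessary but sufficient for a matrix to come from actual votes; everything else is a direct rewriting of definitions. I would handle it by induction on $t$ via the permutation-matrix peeling argument above, which also makes clear why no further constraints (e.g.\ on which manipulator does what) are needed in the unweighted case, in contrast to WBM. One small care point to mention: matching the index conventions, namely that FMM uses an $m\times m$ matrix over the $m$ positions left after $p$ is fixed on top, and that discarding instances with negative capacity up front keeps the $g_i$ nonnegative as FMM requires.
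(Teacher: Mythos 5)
Your proposal is correct, and the reduction itself (the capacity values $g_i = SC_{\mathcal{V}}(p)+t\cdot|\mathcal{C}|-SC_{\mathcal{V}}(c_i)-1$, the matrix encoding $M[i][j]$ as placement counts, and the row/column-sum-$t$ conditions) is identical to the paper's. Where you genuinely diverge is in the only nontrivial step, the converse direction showing that every feasible matrix is realizable by actual votes. The paper proves this with a bespoke greedy algorithm that fills positions from highest to lowest and, upon a conflict, repairs the partial votes by a chain of switches; termination of the switching loop is argued via an auxiliary bipartite graph of maximum degree two containing an augmenting-type path (Lemma~\ref{whileend}). You instead invoke the Birkhoff--von Neumann decomposition: a nonnegative integer matrix with all row and column sums equal to $t$ is a sum of $t$ permutation matrices, each permutation matrix being one manipulator's vote, with the existence of a perfect matching in the support at each peeling step guaranteed by Hall's condition (any $k$ rows carry total mass $ks$, which at most $s$ per column forces at least $k$ incident columns). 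Your route is shorter, standard, and makes transparent exactly why the unweighted case collapses to a matrix problem while the weighted case does not; it is also still constructive and polynomial, since each peel is one bipartite matching computation and there are $t$ of them. The paper's route is more self-contained and hands you the votes by direct local surgery, at the cost of the more delicate correctness/termination argument for the switching loop. Both are valid proofs of the lemma; no gap in yours.
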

\begin{proof}[Sketch] Let $\Gamma:=((\mathcal{C}\cup \{p\},\Pi_{\mathcal{V}},\mathcal{V}), \mathcal{V}', t)$ be an instance of UBM. 
By Observation~\ref{ob1}, we know that if $\Gamma$ is a true-instance there must be a solution $\Pi_{\mathcal{V}'}$ such that each manipulator places $p$ in his/her highest position. We assume that $SC_{\mathcal{V}}(p)+t\cdot |\mathcal{C}|-SC_{\mathcal{V}}(c)-1\geq 0$ for all $c\in \mathcal{C}$ as discussed in Section~\ref{sec:wbm}. Let $(c_1,c_2,...,c_{|\mathcal{C}|})$ be any arbitrary ordering of $\mathcal{C}$. We construct an instance $\Gamma':=(t,g)$ of FMM where $t$ has the same value as in $\Gamma$ and $g:=\{g_1,g_2,...,g_{|\mathcal{C}|}\}$ with $g_i=SC_{\mathcal{V}}(p)+t\cdot |\mathcal{C}|-SC_{\mathcal{V}}(c_i)-1$ for all $i\in [|\mathcal{C}|]$. It is clear that the construction takes polynomial time.
In the following, we prove that $\Gamma$ is a true-instance if and only if $\Gamma'$ is a true-instance.

$\Rightarrow:$ Given a solution $\Pi_{\mathcal{V}'}$ for $\Gamma$, we can get a solution for $\Gamma '$ by setting $M[i][j]=|\{\pi\in_{\tiny{+}} \Pi_{\mathcal{V}'}\mid \pi(c_i)=j\}|$, where $\{\pi\in_{\tiny{+}} \Pi_{\mathcal{V}'}\mid \pi(c_i)=j\}$ is the multiset containing all votes $\pi\in_+ \Pi_{\mathcal{V}'}$ with $\pi(c_i)=j$. By the above construction, the correctness of $M$ is easy to verify.

$\Leftarrow:$ Let a $|\mathcal{C}|\times |\mathcal{C}|$ matrix $M$ be a solution for $\Gamma':=(t,g:=\{g_1,g_2,...,g_{|\mathcal{C}|}\})$. Then, a solution for $\Gamma$, where there are exactly $M[i][j]$ manipulators who place $c_i$ in the $j$-th positions, can be constructed by the following polynomial algorithm.
For simplicity, for a candidate $c_i$ and an integer $j$ with $1\leq j\leq |\mathcal{C}|$, $c_i\rightsquigarrow j$ means that there are less than $M[i][j]$ manipulators who have already placed $c_i$ in the $j$-th position. For two partial votes $\pi$ and $\pi'$ and two candidates $c$ and $c'$, $(\pi,c)\leftrightarrow (\pi',c')$ means to switch the position of $c$ in $\pi$ and the position of $c'$ in $\pi'$, that is, if $\pi(c)=j, \pi'(c')=j'$ then, after $(\pi,c)\leftrightarrow (\pi',c')$, $\pi(c')=j, \pi'(c)=j'$.
\begin{description}\itemsep=-2pt
\item[Algorithm for constructing a solution for ${\Gamma}$ from $\Gamma'$]
\item[Step 1] Initialize $\Pi_{\mathcal{V}'}:=\{\pi_1,\pi_2,...,\pi_t\}$ of partial votes such that each partial vote has empty domain;
\item[Step 2] Set $\pi_z(p)=|\mathcal{C}|+1$ for all $z\in [t]$;
\item[Step 3] For $\bar{j}=|\mathcal{C}|$ to $1$, do
\item[Step 3.1] While $\exists \pi_z$ where the $\bar{j}$-th position is free, do
\item[Step 3.1.1]  Let $c_i$ be any candidate with $c_i\rightsquigarrow \bar{j}$;
\item[Step 3.1.2]  If $\pi_z(c_i)=-1$, then set $\pi_z(c_i)=\bar{j}$; 
\item[Step 3.1.3]  Else, let $j'=\pi_z(c_i)$ and let $\pi_{z'}$ be a vote with $\pi_{z'}(c_i)=-1$;
\item[Step 3.1.3.1] If the $\bar{j}$-th position of $\pi_{z'}$ is free, then set $\pi_{z'}(c_i)=\bar{j}$
\item[Step 3.1.3.2] Else
\item[Step 3.1.3.2.1] While $\exists j''>\bar{j}$ with $\pi_{z}^{-1}(j'')=\pi_{z'}^{-1}(j')$, do
\item[\textcolor{white}{Step 3.1.3.2.1.1}]  $(\pi_z, \pi_z^{-1}(j'))\leftrightarrow (\pi_{z'},\pi_{z'}^{-1}(j'))$;
\item[\textcolor{white}{Step 3.1.3.2.1.2}]  Let $j'=j''$;
\item[\textcolor{white}{Step}] \textbf{End While 3.1.3.2.1}
\item[Step 3.1.3.2.2] $(\pi_z,\pi_z^{-1}(j'))\leftrightarrow (\pi_{z'},\pi_{z'}^{-1}(j'))$;
\item[Step 3.1.3.2.3] Set $\pi_z(c_i)=\bar{j}$;
\item[\textcolor{white}{Step}] \textbf{End Else 3.1.3.2}
\item[\textcolor{white}{Step}] \textbf{End Else 3.1.3}
\item[\textcolor{white}{Step}] \textbf{End While 3.1}
\item[\textcolor{white}{Step}] \textbf{End For 3}
\item[Step 4] Return $\Pi_{\mathcal{V}'}$.
\end{description}

Since $\sum_{i=1}^{|\mathcal{C}|}{M[i][\bar{j}]}=t$ and there are exactly $t$ manipulators, there must be a candidate $c_i$ with $c_i\rightsquigarrow \bar{j}$ whenever there is a vote whose $\bar{j}$-th position is free, which guarantees the soundness of Step 3.1.1. Similarly, there must be a $\pi_{z'}$ with $\pi_{z'}(c_i)=-1$ in Step 3.1.3, since, otherwise, $\sum_{j=1}^{|\mathcal{C}|}{M[i][j]}>t$, contradicting that the given instance of FMM is a true-instance. After the switches in the while loop in Step 3.1.3.2.1 and Step 3.1.3.2.2, both $\pi_z$ and $\pi_{z'}$ must fulfill the injection properties of partial votes. See Fig. \ref{fig:fmmubm} for an example of the switches of the while loop in Step 3.1.3.2.1.
\begin{figure}
\includegraphics[width=0.5\textwidth]{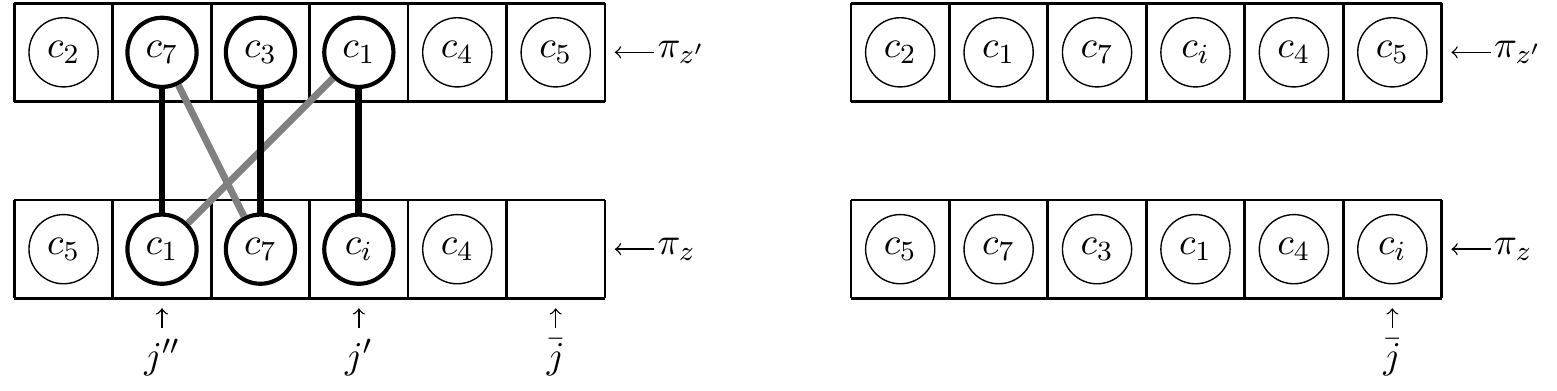}
\caption{The left-hand shows the statues of $\pi_z$ and $\pi_{z'}$ before performing the switches. Due to the algorithm, switchings will happen between the candidates linked by dark lines. The gray lines here are to show that $\pi_{z}^{-1}(j'')=\pi_{z'}^{-1}(j')$, as in the precondition of the while loop in Step 3.1.3.2.1.
The right-hand shows the status after these switches.}\label{fig:fmmubm}
\end{figure}

 Obviously, such a constructed $\Pi_{\mathcal{V}'}$ corresponds to a solution for UBM: for each candidate $c_i\in \mathcal{C}$, $SC_{\mathcal{V}'}(c_i)=\sum_{j=1}^{|\mathcal{C}|}{(j-1)\cdot M[i][j]}\leq g_i= SC_{\mathcal{V}}(p)+t\cdot |\mathcal{C}|-SC_{\mathcal{V}}(c_i)-1$.

To analysis the running time of the algorithm, we need the following lemma.

\begin{lemma}
\label{whileend}
Each execution of the ``while'' loop in Step 3.1.3.2.1 takes polynomial time.
\end{lemma}
\begin{proof}
To prove the lemma, we construct an auxiliary bipartite graph $B$ with $\mathcal{C}_{z'}$ as the left-hand vertices and $\mathcal{C}_z$ as the right-hand vertices, where $\mathcal{C}_{z'}$ and $\mathcal{C}_{z}$ are the sets of candidates which have been placed by $\pi_{z'}$ and $\pi_z$ in some $\bar{j}_{>}$-th position (a $j_>$-th position is a position higher than the $j$-th position), respectively. Two vertices are adjacent if and only if they represent the same candidate (as the vertices linked by a gray line in Figure 1) or they were placed in the same (but not identity) positions~(as the vertices linked by a black line in Figure 1). We observe that the constructed auxiliary graph has maximum degree two. Since $\mathcal{C}_{z'}\setminus \mathcal{C}_z$ is not empty, there is a simple path $P:=(c_{a_1}, c_{a_2},..., c_{a_x})$ with $c_{a_1}=c_i$ and $c_{a_{x}}\in \mathcal{C}_{z'}\setminus \mathcal{C}_z$. It is clear that each execution of the ``while'' loop corresponds to the following switching processing: switching the positions of $a_k$ and $a_{k+1}$ for all $k=1,3,...,x-1$~(since $c_{a_1}=c_i\in\mathcal{C}_z$ and $c_{a_x}\in\mathcal{C}_{z'}$, $x$ is even). The lemma follows from the truth that the length of the simple path is bounded by $2|\mathcal{C}|$.
\end{proof}

We now analysis the running time. The algorithm has three loops. The ``for'' loop in Step 3 has $|\mathcal{C}|$ rounds. The ``while'' loop in Step 3.1 has at most $t$ rounds since each execution of the loop fixes a free position for some $\pi_z$. Due to Lemma \ref{whileend}, the ``while'' loop in Step 3.1.3.2.1 takes polynomial time. Summery all above, the running time of the algorithm is polynomially in $t$ and $|\mathcal{C}|$, which complete the proof.
\end{proof}

Due to Lemmas~\ref{FMM} and~\ref{FMMtoBM}, we get the following theorem.
\begin{theorem}
UBM can be solved in $O^*(t^{2|\mathcal{C}|})$ time.
\end{theorem}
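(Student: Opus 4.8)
The plan is to compose the two preceding lemmas. First I would invoke Lemma~\ref{FMMtoBM}: given an instance $\Gamma=((\mathcal{C}\cup\{p\},\Pi_{\mathcal{V}},\mathcal{V}),\mathcal{V}',t)$ of UBM, in polynomial time we produce an equivalent instance $\Gamma'=(t,g)$ of FMM, where $g=\{g_1,\dots,g_{|\mathcal{C}|}\}$ with $g_i=SC_{\mathcal{V}}(p)+t\cdot|\mathcal{C}|-SC_{\mathcal{V}}(c_i)-1$. Note that the multiset $g$ has exactly $|\mathcal{C}|$ entries, one capacity per candidate of $\mathcal{C}$; the distinguished candidate $p$ does not contribute a row/column of the magic matrix because, by Observation~\ref{ob1}, it is fixed in the top position of every manipulator's vote. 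Hence the parameter ``$m$'' occurring in the FMM formulation equals $|\mathcal{C}|$ for the constructed instance, and the preliminary assumption $g_i\geq 0$ for all $i$ is exactly the feasibility precondition already justified in Section~\ref{sec:wbm}.

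Next I would apply Lemma~\ref{FMM} to $\Gamma'$, which decides FMM in $O^*(t^{2m})=O^*(t^{2|\mathcal{C}|})$ time. Since the reduction of Lemma~\ref{FMMtoBM} runs in polynomial time and the $O^*(\cdot)$ notation absorbs factors polynomial in the input size, the whole procedure---build $\Gamma'$, then run the FMM dynamic program---decides UBM in $O^*(t^{2|\mathcal{C}|})$ time. Correctness is immediate from the ``if and only if'' equivalence established in Lemma~\ref{FMMtoBM} together with the correctness of the FMM algorithm (which in turn follows from the meaning of the table $DT(l,T)$). If one additionally wants the witness, i.e.\ the actual votes $\Pi_{\mathcal{V}'}$ of the manipulators in the true-instance case, one applies the Step~1--4 construction from the proof of Lemma~\ref{FMMtoBM}, whose running time is polynomial in $t$ and $|\mathcal{C}|$ by Lemma~\ref{whileend}; this adds nothing to the asymptotic bound.

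There is essentially no remaining obstacle once the two lemmas are in hand; the only point requiring a moment's care is the bookkeeping on the dimension of the magic matrix---it is $|\mathcal{C}|$, not $|\mathcal{C}\cup\{p\}|$---so that the exponent in the final bound is $2|\mathcal{C}|$ rather than $2(|\mathcal{C}|+1)$. Accordingly, the theorem follows.
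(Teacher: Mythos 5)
Your proposal is correct and matches the paper's proof, which likewise obtains the theorem by composing Lemma~\ref{FMMtoBM} (the polynomial-time reduction to FMM) with Lemma~\ref{FMM} (the $O^*(t^{2m})$ algorithm for FMM). Your extra remark that the magic matrix has dimension $|\mathcal{C}|$ rather than $|\mathcal{C}\cup\{p\}|$ is a correct and harmless clarification of the same argument.
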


Next we show that FMM can be solved by an integer linear programming (ILP) based algorithm.
The ILP contains $m^2$ variables $x_{ij}$ for $i,j\in [m]$ and, subject to the following restrictions

\[
\left\{\begin{array}{l}
\sum\limits_{i=1}^{m}{x_{ij}=t}~\text{for all}~j\in [m]\\[4mm]
\sum\limits_{j=1}^{m}{x_{ij}=t}~\text{for all}~i\in [m]\\[4mm]
\sum\limits_{j=1}^{m}{(j-1)\cdot x_{ij}}\leq g_i~\text{for all}~i\in [m]\\[4mm]
x_{ij}\geq 0~\text{for all}~i,j\in [m]\end{array}\right.
\]
where $t\in \mathbb{N}$ and $g:=\{g_{_1},g_{_2},\dots,g_m\}$ with $g_i\in \mathbb{N}$ for all $i\in [m]$ are input.

H. W. Lenstra~\cite{lenstra83} proposed an $O^*(\zeta^{O(\zeta)})$-time algorithm for solving ILP with ${\zeta}$ variables. The running time was then improved by R. Kannan~\cite{kannan89}.
\begin{lemma}\label{lemma:ILPFMM}
\cite{kannan89} An ILP problem with $\zeta$ variables can be solved in $O^*(\zeta^{4.5\zeta})$ time.
\end{lemma}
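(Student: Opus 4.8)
Strictly speaking, Lemma~\ref{lemma:ILPFMM} is not a new contribution but the classical theorem of Lenstra~\cite{lenstra83} on integer programming in a fixed number of variables, with the sharper running-time bound due to Kannan~\cite{kannan89}; in this paper it is used only as a black box, so no independent proof is really needed. Nevertheless, let me sketch how such a statement is established, since that argument is exactly what licenses its use here.

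The plan is as follows. Given an ILP with $\zeta$ variables, regard its feasible region as a rational polytope $P \subseteq \mathbb{R}^{\zeta}$ (for the optimization version one either binary-searches on the objective value or carries the objective along as an extra coordinate). First I would reduce to the case that $P$ is bounded and full-dimensional, using standard size estimates for the vertices of $P$ in terms of the bit-length of the input. The key geometric step is then to compute, via lattice basis reduction (the LLL algorithm, or Kannan's refinement of it), a linear transformation making $P$ \emph{well-rounded}, i.e.\ sandwiched between two concentric balls whose radii differ only by a factor depending on $\zeta$ alone. Then I would invoke Khinchin's flatness theorem: either the transformed body is \emph{fat} in every lattice direction, in which case it provably contains a lattice point which can be exhibited; or there is a direction $d \in \mathbb{Z}^{\zeta}$ along which the lattice width of $P$ is bounded by a function $w(\zeta)$ of $\zeta$ only. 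In the latter case every integer point of $P$ lies on one of at most $w(\zeta)+1$ parallel hyperplanes $d^{\top} x = k$; we branch over the choice of $k$, intersect $P$ with each such hyperplane, and recurse on an instance of dimension $\zeta - 1$.

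The resulting recursion tree has depth $\zeta$ and branching factor $w(\zeta)$, so the number of leaves is $w(\zeta)^{\zeta}$ up to a polynomial factor, and since $w(\zeta) = \zeta^{O(1)}$ this yields the $O^*(\zeta^{O(\zeta)})$ bound of Lenstra; Kannan's more careful control of the flatness constant and of the reduction step pins the exponent down to $4.5\zeta$. The main obstacle in making this rigorous is precisely the flatness theorem together with an explicit bound on $w(\zeta)$: one needs both the lower bound (a sufficiently round body of large enough inradius must contain a lattice point) and the upper bound on the lattice width of a flat body, and these rest on the quantitative theory of successive minima and the behaviour of reduced bases. Turning the geometric dichotomy into a clean recursion — handling the rounding errors, the unbounded case, and the objective — is routine by comparison.
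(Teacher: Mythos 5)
Your reading is exactly right: the paper offers no proof of this lemma at all, citing it as a black-box result of Kannan (improving Lenstra), which is precisely how you treat it. Your optional sketch of the underlying argument (rounding via basis reduction, the flatness-theorem dichotomy, branching on at most $w(\zeta)+1$ hyperplanes and recursing in dimension $\zeta-1$, with Kannan's sharper constants giving the $\zeta^{4.5\zeta}$ bound) is an accurate account of the cited work, so there is nothing to correct.
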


Due to Lemmas~\ref{FMMtoBM} and~\ref{lemma:ILPFMM}, we have the following theorem.

\begin{theorem}
UBM admits an algorithm with running time $O^*(2^{9|\mathcal{C}|^2\log{|\mathcal{C}|}})$.
\end{theorem}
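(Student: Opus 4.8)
The plan is to combine the reduction from UBM to FMM (Lemma~\ref{FMMtoBM}) with the ILP formulation of FMM given above and then invoke Kannan's algorithm (Lemma~\ref{lemma:ILPFMM}). First I would note that, given an instance of UBM, Lemma~\ref{FMMtoBM} produces in polynomial time an equivalent instance of FMM with $m=|\mathcal{C}|$ and a capacity multiset $g=\{g_1,\dots,g_m\}$. Second I would observe that FMM is exactly the feasibility question for the integer linear program displayed above: a nonnegative integer matrix $M$ satisfying the row-sum, column-sum, and capacity constraints is precisely an assignment to the variables $x_{ij}$ satisfying the four families of constraints. Hence deciding the FMM instance amounts to deciding whether this ILP is feasible.

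Third I would count variables: the program has exactly $\zeta=m^2$ variables $x_{ij}$, $i,j\in[m]$, and a polynomial number of constraints of polynomially bounded bit-length (the $g_i$ are bounded by $SC_{\mathcal{V}}(p)+t|\mathcal{C}|$, which is polynomial in the input size). By Lemma~\ref{lemma:ILPFMM}, this ILP can be solved in $O^*(\zeta^{4.5\zeta}) = O^*((m^2)^{4.5 m^2})$ time. Finally I would simplify the exponent: $(m^2)^{4.5m^2} = 2^{4.5 m^2 \log_2(m^2)} = 2^{9 m^2 \log_2 m}$, so the overall running time for UBM is $O^*(2^{9|\mathcal{C}|^2\log|\mathcal{C}|})$, with the polynomial-time reduction of Lemma~\ref{FMMtoBM} absorbed into the $O^*$ notation.

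I do not expect a serious obstacle here, since all the heavy lifting is done by the two cited lemmas; the only points requiring care are verifying that the ILP constraints faithfully encode FMM (in particular that integrality of the $x_{ij}$ together with the row/column sums forces exactly the structure of a ``magic matrix''), and getting the arithmetic of the exponent right, namely that $\log(m^2)=2\log m$ turns the factor $4.5$ into $9$. One small subtlety worth a sentence is that Kannan's bound is typically stated for the dimension of the program rather than the number of constraints, so I would make explicit that it is the variable count $m^2$ that governs the exponent, while the capacities $g_i$ and the right-hand side $t$ enter only polynomially (through the $O^*$ factor) because their encoding length is polynomial in the original UBM instance.
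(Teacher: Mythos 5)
Your proposal is correct and follows exactly the paper's route: reduce UBM to FMM via Lemma~\ref{FMMtoBM}, observe that FMM is precisely the feasibility question for the displayed ILP with $\zeta=m^2$ variables, and apply Kannan's bound from Lemma~\ref{lemma:ILPFMM} to get $(m^2)^{4.5m^2}=2^{9m^2\log m}$. The exponent arithmetic and the remark that only the variable count (not the constraint count or the magnitudes of $t$ and the $g_i$) drives the exponential factor are both right.
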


\section{Borda Manipulation under Single-Peaked Elections}\label{singlepeaked}
The single-peaked election was introduced by D. Black in 1948~\cite{Black48}.
The complexity of many voting problems under single-peaked elections has been studied in the literature~\cite{DBLP:conf/tark/FaliszewskiHH11,DBLP:conf/aaai/BrandtBHH10,DBLP:journals/iandc/FaliszewskiHHR11}. It turns out that many
$\mathcal{NP}$-hard problems become polynomial-time solvable when restricted to single-peaked
elections~\cite{DBLP:conf/aaai/BrandtBHH10,DBLP:journals/iandc/FaliszewskiHHR11}.

Given a set $\mathcal{C}$ of candidates and a bijection $\mathcal{L} :\mathcal{C}\rightarrow [|\mathcal{C}|]$, we say that a vote
$\pi_v: \mathcal{C}\rightarrow [|\mathcal{C}|]$ is {\textit{coincident}} with $\mathcal{L}$
if and only if for any three distinct candidates $a,b,c\in \mathcal{C}$ with $\mathcal{L}(a)< \mathcal{L}(b)< \mathcal{L}(c)$
or $\mathcal{L}(c)< \mathcal{L}(b)< \mathcal{L}(a)$,  $\pi_v(c)> \pi_v(b)$ implies $\pi_v(b)> \pi_v(a)$. The candidate having the highest value of $\pi_v(\cdot)$
is called the {\textit{peak}} of $\pi_v$ with respect to $\mathcal{L}$. An election $(\mathcal{C},\Pi_{\mathcal{V}},\mathcal{V})$ is a {\textit{single-peaked}} election if
there exists a bijection $\mathcal{L} :\mathcal{C}\rightarrow [|\mathcal{C}|]$ such that all votes of $\Pi_{\mathcal{V}}$ are coincident with $\mathcal{L}$.
We call such a bijection $\mathcal{L}$ a {\textit{harmonious order}} of $(\mathcal{C},\Pi_{\mathcal{V}},\mathcal{V})$.
See Fig.~\ref{sinlepeaked} for a concrete example of single-peaked elections.

\begin{figure}[t]
\begin{center}
\includegraphics[width=0.3\textwidth]{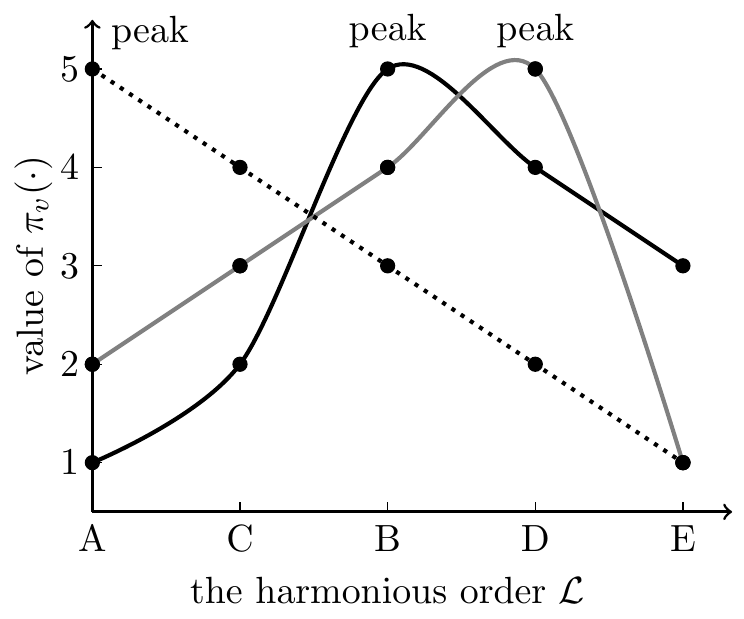}
\end{center}
\caption{This figure shows a visual representation of a single-peaked election which contains five candidates ${\boldsymbol{{A,B,C,D,E}}}$ and three votes $\boldsymbol{\pi_{v_1}}$ (defined by $\boldsymbol{\pi_{v_1}(B)=5}$, $\boldsymbol{\pi_{v_1}(D)=4}$, $\boldsymbol{\pi_{v_1}(E)=3}$,
$\boldsymbol{\pi_{v_1}(C)=2}$, and $\boldsymbol{\pi_{v_1}(A)=1}$), $\boldsymbol{\pi_{v_2}}$ (which is defined by $\boldsymbol{{\pi_{v_2}(D)=5, \pi_{v_2}(B)=4, \pi_{v_2}(C)=3, \pi_{v_2}(A)=2}},$ and $\boldsymbol{{\pi_{v_2}(E)=1}}$), and
$\boldsymbol{\pi_{v_3}}$ (which is defined by $\boldsymbol{\pi_{v_3}(A)=5,}$ $\boldsymbol{\pi_{v_3}(C)=4}$, $\boldsymbol{\pi_{v_3}(B)=3}$,
$\boldsymbol{\pi_{v_3}(D)=2},$  and $\boldsymbol{\pi_{v_3}(E)=1}$). The harmonious order $\boldsymbol{\mathcal{L}}$ is 
$\boldsymbol{\mathcal{L}(A)=1,\mathcal{L}(C)=2,\mathcal{L}(B)=3, \mathcal{L}(D)=4,
\mathcal{L}(E)=5}$.
Here, $\boldsymbol{\pi_{v_1}}$ is illustrated by the dark line, $\boldsymbol{\pi_{v_2}}$ is illustrated by the gray line and $\boldsymbol{\pi_{v_3}}$ is illustrated by the dotted line.
}
\label{sinlepeaked}
\end{figure}

It has been shown in~\cite{DBLP:conf/ecai/EscoffierLO08} that one can test whether a given election is a single-peaked election in polynomial time.
Moreover, a harmonious order can be found in polynomial time
if the given election is a single-peaked election.

For a manipulation problem under single-peaked elections,
we are given a single-peaked election $(\mathcal{C}\cup \{p\},\Pi_{\mathcal{V}},\mathcal{V})$, a harmonious
order $\mathcal{L}$, and a set $\mathcal{V}'$ of manipulators.
We are asked whether the manipulators can cast their votes $\Pi_{\mathcal{V}'}$ in such a way that all their votes are coincident with $\mathcal{L}$ and
$p$ wins the election $(\mathcal{C}\cup \{p\},\Pi_{\mathcal{V}}\uplus\Pi_{\mathcal{V}'},\mathcal{V}\cup \mathcal{V'})$
\cite{DBLP:conf/aaai/Walsh07}. In the following, let UBM1SP and UBM2SP denote the problems of UBM with only one
manipulator and with exactly two manipulators under single peaked elections, respectively.

It is known that the unweighted Borda manipulation problem is polynomial-time solvable with one manipulator \cite{BARTHOLDI89} but becomes $\mathcal{NP}$-hard with two manipulators \cite{DBLP:conf/ijcai/BetzlerNW11,DBLP:conf/aaai/DaviesKNW11}. Here, we show that this problem with two manipulators can be solved in polynomial time in single-peaked elections.
\begin{theorem}\label{SPmain}
Both UBM1SP and UBM2SP are polynomial-time solvable.
\end{theorem}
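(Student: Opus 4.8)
The plan is to exploit the rigid combinatorial structure of votes coincident with $\mathcal{L}$. First I would record the structural fact that in any vote $\pi_v$ coincident with $\mathcal{L}$ the set of candidates occupying the top $k$ positions is, for every $k$, an interval of $\mathcal{L}$ containing the peak; equivalently, reading the vote from the top down one starts at the peak and at each step appends the candidate immediately to the left or immediately to the right (in $\mathcal{L}$) of the interval built so far. I would then prove the single-peaked analogue of Observation~\ref{ob1}: every true-instance has a solution in which each manipulator ranks $p$ first, hence makes $p$ the peak of his/her vote. The point is that moving $p$ from position $x$ to position $m$ and re-filling the interval that was above $p$ in increasing $\mathcal{L}$-order yields again a coincident vote in which every other member of that interval sits at position at most $m-1<x+(m-x)$, i.e.\ is pushed down by strictly less than the $m-x$ units by which $p$'s score rose, so no capacity is violated; candidates outside that interval keep their positions. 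After this normalization a manipulator's vote is completely described by an interleaving of the sequence $\mathcal{C}^{L}$ of $p$'s left neighbours (ordered from the nearest outward) with the sequence $\mathcal{C}^{R}$ of $p$'s right neighbours.

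For UBM1SP this immediately gives a polynomial algorithm. With $p$ fixed at the top, the single manipulator's vote corresponds to a monotone lattice path from $(0,0)$ to $(|\mathcal{C}^{L}|,|\mathcal{C}^{R}|)$, a right step meaning ``place the next left neighbour'' and an up step ``place the next right neighbour''; the candidate placed after $i+j$ steps receives score $m-1-(i+j)$, so its capacity is an upper bound on the step at which it may be placed. I would run a boolean dynamic program over the $O(m^{2})$ pairs $(i,j)$ --- ``can the first $i$ left and the first $j$ right neighbours be placed legally?'' --- which solves UBM1SP in polynomial time.

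For UBM2SP the two manipulators' votes are two such lattice paths $P_{1},P_{2}$, and for each candidate $c$ the requirement $SC_{\mathcal{V}'}(c)\le g(c)$ translates, after substituting positions, into a lower bound on the sum over $s\in\{1,2\}$ of the number of $\mathcal{C}^{L}$-candidates ranked above $c$ in $P_{s}$ (when $c\in\mathcal{C}^{R}$), and symmetrically into a lower bound involving $\mathcal{C}^{R}$-candidates when $c\in\mathcal{C}^{L}$; the two families of bounds are conjugate, one constraining the horizontal profile of the paths and the other their vertical profile. The key structural step I would prove is a sorting lemma: replacing $(P_{1},P_{2})$ by the pointwise maximum and minimum of the two profiles leaves all constraint sums unchanged --- this uses $\max+\min=\mathrm{sum}$ for the profiles and, via inclusion--exclusion $|A\cap B|+|A\cup B|=|A|+|B|$, for their conjugates as well --- so we may assume $P_{1}$ dominates $P_{2}$ everywhere. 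On nested paths a polynomial dynamic program over the joint state (current index together with the two current profile values) decides feasibility: the $\mathcal{C}^{R}$-constraints are local and are checked as each right neighbour is placed, while the $\mathcal{C}^{L}$-constraints, being about the conjugate profile, are verified at the later of the two moments at which the relevant left neighbour is placed; nesting makes the conjugate profiles monotone, so the data needed for that deferred check stays of bounded size.

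The main obstacle is exactly this coordination in UBM2SP: a candidate occupies different positions in the two votes, so a naive dynamic program would have to remember the whole history of one path while the other catches up. The sorting lemma is what rescues the approach --- it both breaks the symmetry and forces the conjugate profiles to be monotone --- and getting the bookkeeping of the deferred left-neighbour checks right, so that the state stays polynomial, is the part that needs the most care. The UBM1SP case and the structural normalization are routine by comparison.
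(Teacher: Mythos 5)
Your normalization (forcing $p$ to the top), your lattice-path encoding, and your UBM1SP dynamic program are all correct, and your ``sorting lemma'' for UBM2SP is a genuinely nice observation that is different from anything in the paper: writing $\lambda_s(j)$ for the number of left neighbours ranked above $r_j$ in vote $s$, the pointwise $\max$/$\min$ pair preserves $\lambda_1(j)+\lambda_2(j)$ for every $j$, and the identity $|A\cap B|+|A\cup B|=|A|+|B|$ applied to $A_i^s=\{j:\lambda_s(j)\le i-1\}$ shows it also preserves $\rho_1(i)+\rho_2(i)$ for every left neighbour $l_i$; both new profiles are monotone, so they are realizable votes. Hence one may indeed assume the two paths are nested. (The paper instead proves an exchange lemma and runs a greedy procedure that keeps the two manipulators synchronized on the \emph{set} of placed candidates after every outer iteration, which is a much stronger invariant than nesting.)

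The gap is in the final dynamic program for UBM2SP. With the state $(j,\lambda_1(j),\lambda_2(j))$ the constraints on right neighbours are local, as you say, but the deferred check for a left neighbour $l_i$ fails: at the transition $j\to j+1$ of the dominated path $\lambda_2$, every $i$ with $\lambda_2(j)<i\le\lambda_2(j+1)$ has $\rho_2(i)=j$ finalized, and the test $\rho_1(i)+\rho_2(i)\ge M_i$ requires $\rho_1(i)$. For those $i$ with $i\le\lambda_1(j)$ the dominating path passed $l_i$ at some strictly earlier step, so $\rho_1(i)$ is a function of the \emph{history} of $\lambda_1$, not of its current value; moreover several such $i$ can be finalized in one transition, each needing a different historical value of $\rho_1$. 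Nesting does not help here --- the conjugate profiles $\rho_s$ are monotone with or without it --- so the ``data needed for the deferred check'' is essentially the whole trajectory of path $1$, and the proposed state space does not decide feasibility. The reduction you have reached (two monotone paths with lower bounds on $\lambda_1(j)+\lambda_2(j)$ and on $\rho_1(i)+\rho_2(i)$) is very likely polynomial-time solvable, e.g.\ by re-encoding the pair of votes as a single staircase matrix $x_{ij}\in\{0,1,2\}$ with prescribed bounds on row and column sums and using your decomposition idea in reverse, but as written the last step of your argument is asserted rather than proved, and it is exactly the step where the two-manipulator coordination difficulty lives.
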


All remaining parts of this section are devoted to prove Theorem \ref{SPmain}.
To this end, let $({\mathcal{C}\cup \{p\},\Pi_{\mathcal{V}},\mathcal{V}})$ be the given single-peaked election and $\mathcal{L}$ be a harmonious order of $(\mathcal{C}\cup \{p\},\Pi_{\mathcal{V}},\mathcal{V})$.
Let $(l_a,l_{a-1},...,l_1,p,r_1,r_2,...,r_b)$ be an ordering of $\mathcal{C}\cup \{p\}$ with $\mathcal{L}(l_a)<\mathcal{L}(l_{a-1}),...,<\mathcal{L}(l_1)<\mathcal{L}(p)<\mathcal{L}(r_1)<\mathcal{L}(r_2),...,<\mathcal{L}(r_b)$, and let
$\mathcal{C}_L:=\{l_a,l_{a-1},...,l_1\}$ and $\mathcal{C}_R:=\{r_1,r_2,...,r_b\}$.


For two partial votes $\pi_1$ with domain $C_1$ and $\pi_2$ with domain $C_2$, we say $\pi_1$ and $\pi_2$ are {\textit{comparable}} if for every
$c\in C_1\cap C_2$, $\pi_1(c)=\pi_2(c)$. Furthermore, for such comparable partial votes $\pi_1$ and $\pi_2$, let $\pi_1\sqcup \pi_2$ denote the partial vote $\pi$ with domain $C_1\cup C_2$ such that $\pi(c)=\pi_1(c)$ for all $c\in C_1$ and $\pi(c)=\pi_2(c)$ for all $c\in C_2$.

For a partial vote $\pi$ with domain $C$ and a vote $\pi'$, we say $\pi$ is {\textit{extendable}} to $\pi'$ if for any $a\in C$, $\pi(a)=\pi'(a)$.

The definition of single-peaked elections directly implies the following observation.

\begin{observation}\label{obs:singlevote}
Let $\mathcal{L}$ be a harmonious order, $\pi_v$ be a vote which is coincident with $\mathcal{L}$ and $c$ be the peak of $\pi_v$, then
for any $c_1, c_2$ with $\mathcal{L}(c_1)<\mathcal{L}(c_2)<\mathcal{L}(c)$ or $\mathcal{L}(c)<\mathcal{L}(c_2)<\mathcal{L}(c_1)$,
$\pi_v(c_1)<\pi_v(c_2)<\pi_v(c)$.
\end{observation}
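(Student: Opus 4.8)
The plan is to derive the observation directly from two ingredients: the defining property of the peak and the coincidence condition, applied to the single triple $(c_1,c_2,c)$. First I would record the fact about the peak. Since $\pi_v$ is a bijection onto $[|\mathcal{C}|]$, its values on distinct candidates are distinct, and the peak $c$ is by definition the candidate attaining the \emph{largest} value of $\pi_v(\cdot)$. Hence $\pi_v(c)>\pi_v(x)$ for every candidate $x\neq c$. In particular $\pi_v(c_2)<\pi_v(c)$, which already furnishes the right-hand inequality of the claim in both cases and requires nothing beyond bijectivity.

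It then remains to show $\pi_v(c_1)<\pi_v(c_2)$, and here I would invoke the coincidence condition once, assigning the peak $c$ to the role of the $\mathcal{L}$-extreme ``trigger'' candidate of that definition, $c_2$ to the middle candidate, and $c_1$ to the other extreme. The key point is that the peak always fills the trigger slot: in the first case $\mathcal{L}(c_1)<\mathcal{L}(c_2)<\mathcal{L}(c)$ the triple realises the branch $\mathcal{L}(a)<\mathcal{L}(b)<\mathcal{L}(c)$ of the disjunction, while in the symmetric case $\mathcal{L}(c)<\mathcal{L}(c_2)<\mathcal{L}(c_1)$ it realises the branch $\mathcal{L}(c)<\mathcal{L}(b)<\mathcal{L}(a)$; in either branch the coincidence condition reads ``$\pi_v(c)>\pi_v(c_2)$ implies $\pi_v(c_2)>\pi_v(c_1)$''. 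Its antecedent $\pi_v(c)>\pi_v(c_2)$ is precisely the peak inequality from the previous step, so the implication immediately yields $\pi_v(c_2)>\pi_v(c_1)$.

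Chaining the two established inequalities gives $\pi_v(c_1)<\pi_v(c_2)<\pi_v(c)$, completing the argument in both cases simultaneously. I expect no genuine obstacle here: the statement is a direct unfolding of the definitions of ``peak'' and ``coincident with $\mathcal{L}$.'' The only care required is bookkeeping, namely matching the candidates $c_1,c_2$ and the peak $c$ to the correct roles in the coincidence condition and selecting the branch of the disjunction that agrees with the given $\mathcal{L}$-ordering; once the peak is consistently placed in the trigger slot, both cases collapse into the identical one-line application described above.
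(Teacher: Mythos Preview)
Your argument is correct and is exactly what the paper intends: it states only that the observation ``directly follows'' from the definition of single-peakedness, and your unfolding of the coincidence condition with the peak in the trigger slot is precisely that direct derivation.
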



\begin{lemma}\label{lem:ob2}
Every true-instance of UBM under single-peaked elections has a solution
where every manipulator places the distinguished candidate $p$ in his/her highest position.
\end{lemma}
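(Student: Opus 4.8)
The plan is to mimic the argument behind Observation~\ref{ob1}, but to be careful that the modified votes remain coincident with $\mathcal{L}$, since in the single-peaked setting a solution is only admissible if every manipulator's vote is coincident with $\mathcal{L}$. So suppose $\Pi_{\mathcal{V}'}$ is a solution and $\pi\in_+\Pi_{\mathcal{V}'}$ is a manipulator's vote in which the peak is some candidate $c\neq p$; without loss of generality say $\mathcal{L}(c)<\mathcal{L}(p)$, i.e.\ $c$ lies (weakly) to the left of $p$ in the harmonious order (the case $\mathcal{L}(c)>\mathcal{L}(p)$ is symmetric). I would replace $\pi$ by the vote $\pi'$ obtained by ``rotating'' $p$ up to the top: set $\pi'(p)=|\mathcal{C}|+1$, and for every other candidate $d$ with $\pi(d)>\pi(p)$ shift it down by one, $\pi'(d)=\pi(d)-1$, while leaving $\pi'(d)=\pi(d)$ for every $d$ with $\pi(d)<\pi(p)$.

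The two things to check are (i) that $\pi'$ is still coincident with $\mathcal{L}$, and (ii) that replacing $\pi$ by $\pi'$ does not increase the score of any candidate in $\mathcal{C}$, so that the new profile is still a successful manipulation for $p$. For (ii) this is immediate: $SC_{\pi'}(p)\ge SC_{\pi}(p)$ and every other candidate's position is nonincreasing, so its contributed score does not increase; hence $SC_{\mathcal{V}\cup\mathcal{V}'}(c')$ does not increase for any $c'\ne p$ while $SC_{\mathcal{V}\cup\mathcal{V}'}(p)$ does not decrease, and $p$ still wins. For (i), the point is that moving $p$ to the peak is consistent with single-peakedness: by Observation~\ref{obs:singlevote}, once $p$ is the peak of $\pi'$ we must have $\pi'(l_1)>\pi'(l_2)>\cdots$ along $\mathcal{C}_L$ and $\pi'(r_1)>\pi'(r_2)>\cdots$ along $\mathcal{C}_R$; I claim the relative order of the candidates in $\mathcal{C}_L$ (resp.\ $\mathcal{C}_R$) is unchanged from $\pi$ to $\pi'$, since the shift $\pi\mapsto\pi'$ is monotone (it subtracts $1$ from a subset of positions and leaves the rest fixed, hence preserves the order of any two candidates). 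So I only need to know that in $\pi$ itself the candidates of $\mathcal{C}_L$ already appear in the order $l_1,l_2,\dots$ by decreasing position and likewise for $\mathcal{C}_R$, plus the position of $p$ relative to each side. Because $\pi$ is coincident with $\mathcal{L}$ and its peak $c$ is in $\mathcal{C}_L\cup\{p\}$, Observation~\ref{obs:singlevote} applied to $\pi$ gives exactly the monotonicity $\pi(r_1)>\pi(r_2)>\cdots$ on $\mathcal{C}_R$ and $\pi(l_i)>\pi(l_{i+1})$ for $i$ large enough (the portion of $\mathcal{C}_L$ on the far side of the peak), and the coincidence condition handles the candidates between $c$ and $p$; a short case check then shows $\pi'$ satisfies the coincidence condition for every triple.

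Repeating this replacement for each manipulator whose vote does not peak at $p$ yields a solution of the required form, which proves the lemma. The main obstacle I anticipate is the bookkeeping in step~(i): one has to argue cleanly that the monotone ``shift $p$ to the top'' map sends a $\mathcal{L}$-coincident vote with an arbitrary peak on the $p$-side of $\mathcal{L}$ to another $\mathcal{L}$-coincident vote, i.e.\ verify the coincidence inequality $\pi'(c)>\pi'(b)\Rightarrow\pi'(b)>\pi'(a)$ for all triples $a,b,c$ with $\mathcal{L}(a)<\mathcal{L}(b)<\mathcal{L}(c)$, splitting on where $a,b,c$ sit relative to $p$ and relative to the old peak. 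Everything else — the score comparison and the induction over manipulators — is routine.
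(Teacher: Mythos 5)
There is a genuine gap, and it is not just ``bookkeeping'': the rotation you propose does not in general produce a vote coincident with $\mathcal{L}$, so step~(i) fails. Your shift $\pi\mapsto\pi'$ is monotone on $\mathcal{C}\setminus\{p\}$, hence preserves the relative order of all candidates other than $p$. But by Observation~\ref{obs:cons}, a vote with peak $p$ is coincident with $\mathcal{L}$ only if $\pi'(l_1)>\pi'(l_2)>\cdots>\pi'(l_a)$, i.e.\ the left candidates must appear in order of increasing distance from $p$. If the old peak is $l_i$ with $i\geq 2$, then Observation~\ref{obs:singlevote} applied to $\pi$ forces the candidates strictly between the peak and $p$ into the \emph{opposite} order, $\pi(l_{i-1})>\pi(l_{i-2})>\cdots>\pi(l_1)$. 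Concretely, take $\mathcal{L}$-order $(l_2,l_1,p,r_1)$ and the coincident vote $\pi(l_2)=4,\pi(l_1)=3,\pi(p)=2,\pi(r_1)=1$. Your rotation gives $\pi'(p)=4,\pi'(l_2)=3,\pi'(l_1)=2,\pi'(r_1)=1$, and the triple $l_2,l_1,p$ with $\mathcal{L}(l_2)<\mathcal{L}(l_1)<\mathcal{L}(p)$ violates coincidence since $\pi'(p)>\pi'(l_1)$ but $\pi'(l_1)<\pi'(l_2)$. So whenever a manipulator's peak is not adjacent to $p$ in $\mathcal{L}$, the rotated vote is inadmissible, and the sentence ``the coincidence condition handles the candidates between $c$ and $p$'' is exactly where the argument breaks: on that stretch the order must be \emph{reversed}, not preserved.

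The missing idea is that the block of candidates placed at or above $p$'s old position (which by Lemma~\ref{lem:consecutive} is $\{p,l_1,\dots,l_{z-1}\}$ for some $z$) must be \emph{re-sorted} into the order $p>l_1>\cdots>l_{z-1}$ within the same top positions, while everything below $p$ is left untouched; this is what the paper's recasting $\pi'\sqcup\pi''$ does. The price is that your score argument (ii) no longer applies verbatim: under the re-sorting some $l_j$ can move \emph{up} (e.g.\ with $\mathcal{L}$-order $(l_3,l_2,l_1,p)$ and $\pi=(4,3,2,1)$, the recast places $l_1$ at position $3>2$). One then has to check the inequality directly: $p$ gains $z-1$ points while each $l_j$ gains at most $(|\mathcal{C}|+1-j)-(\pi(p)+1)=z-j-2<z-1$, so $SC(p)-SC(l_j)$ strictly increases and $p$ still wins. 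With both of these repairs the proof goes through, but as written the construction is incorrect rather than merely under-justified.
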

\begin{proof}
Assume that $\Pi_{\mathcal{V}'}$ is a solution that contradicts the claim of the lemma,
and ${v}$ is a manipulator who did not place $p$ in his/her highest position.
Without loss of generality, assume that $\mathcal{C}_R\neq \emptyset,\ \mathcal{C}_L\neq \emptyset$
and ${v}$ placed some $l_i$ in his/her highest position.
Let $a:=|\mathcal{C}_L|$ and $b:=|\mathcal{C}_R|$. Due to Observation \ref{obs:singlevote}, $\pi_v(p)>\pi_v(r_1)>\pi_v(r_2)>,...,>\pi_v(r_b)$.
We consider two cases. The first case is that for any $l\in \mathcal{C}_L$, $\pi_v(l)> \pi_v(p)$. In this case, we can create a new solution by recasting $\pi_v$  with
$\pi_v(p)> \pi_v(l_1)> \pi_v(l_2)>,...,> \pi_v(l_a)> \pi_v(r_1)> \pi_v(r_2)>,...,> \pi_v(r_b)$.
The second case is that there is a $l\in \mathcal{C}_L$ with $\pi_v(l)< \pi_v(p)$. In this case, there must be a $z\in [a]$ such that
$\pi_v(l_j)> \pi_v(p)$ for all $j\in [z-1]$ and $\pi_v(l_j)< \pi_v(p)$ for all $a\geq j\geq z$.
Let $C_{small}:=\mathcal{C}_R \cup \{l_z,l_{z+1},...,l_a\}$. We can get a new solution by recasting $\pi_v$ with $\pi'\sqcup \pi''$, where $\pi'$ is the partial vote with domain
$(\mathcal{C}\cup\{p\})\setminus C_{small}$ and codomain $\{|\mathcal{C}\cup\{p\}|, |\mathcal{C}\cup \{p\}|-1,...,|C_{small}|+1\}$ such that
$\pi'(p)> \pi'(l_1)> \pi'(l_2)>,...,> \pi'(l_{z-1})$, and $\pi''$ is the partial vote which has domain $C_{small}$ and is extendable to the original vote $\pi_v$.
Fig. \ref{spcase2}~ illustrates such a case.
\end{proof}

\begin{figure}[t]
\begin{center}
\includegraphics[width=0.45\textwidth]{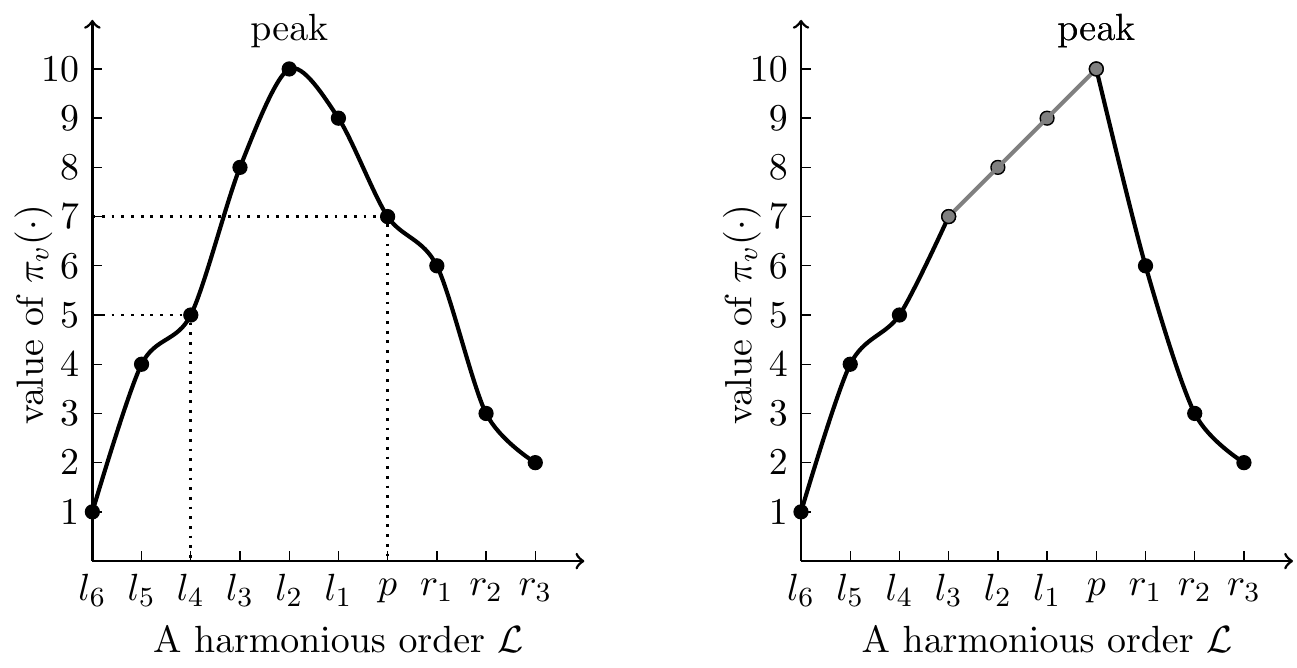}
\end{center}
\caption{Case 2 in the proof of Lemma~\ref{lem:ob2}.
The left-hand shows the original vote
$\boldsymbol{\pi_v}$ with $\boldsymbol{\pi_v(l_2)>\pi_v(l_1)>\pi_v(l_3)>\pi_v(p)>\pi_v(r_1)>\pi_v(l_4)}$ $\boldsymbol{>\pi_v(l_5)>\pi_v(r_2)>\pi_v(r_3)>\pi_v(l_6)}$. The {right-hand} shows the recasted vote $\boldsymbol{\pi'\sqcup \pi''}$ with
$\boldsymbol{\pi_v(p)>\pi_v(l_1)>\pi_v(l_2)>\pi_v(l_3)>\pi_v(r_1)>\pi_v(l_4)}$ $\boldsymbol{>\pi_v(l_5)>\pi_v(r_2)>\pi_v(r_3)>\pi_v(l_6)}$.
Here $\boldsymbol{C_{small}=\{l_4,l_5 ,l_6, r_1,r_2,r_3\}}$, $\boldsymbol{\pi'}$ has domain $\boldsymbol{\{p, l_1, l_2, l_3\}}$ and codomain $\boldsymbol{\{10,9,8,7\}}$, and $\boldsymbol{\pi''}$ has domain
$\boldsymbol{\{r_1,l_4,l_5,r_2,r_3,l_6\}}$ and codomain $\boldsymbol{\{1,2,3,4,5,6\}}$.}
\label{spcase2}
\end{figure}

For a subset $C$ of candidates and a bijection $\mathcal{L}: \mathcal{C}\cup\{p\}\rightarrow [|\mathcal{C}\cup \{p\}|]$, we say that the candidates in $C$ {\textit{lie consecutively}} in $\mathcal{L}$
if and only if there are no $c_1,c_2\in C$ and $c\in(\mathcal{C}\cup \{p\})\setminus C$ such that $\mathcal{L}(c_1)<\mathcal{L}(c)<\mathcal{L}(c_2)$.
If $C$ contains only two candidates $c_1$ and $c_2$, then we say $c_1$ and $c_2$ lie consecutively in $\mathcal{L}$ if $|\mathcal{L}(c_1)-\mathcal{L}(c_2)|=1$.

The following lemma has been proven in~\cite{MichaeTricksinglepeaked89} [Theorem 1].

\begin{lemma}
\label{lem:consecutive}
For two bijections $\mathcal{L}: \mathcal{C}\cup\{p\}\rightarrow [|\mathcal{C}\cup \{p\}|]$ and $\pi: \mathcal{C}\cup\{p\}\rightarrow [|\mathcal{C}\cup \{p\}|]$, $\pi$ is coincident with $\mathcal{L}$ if and only if for any integer $j$ with $1\leq j\leq |\mathcal{C}|$, all candidates in the set $\{c\in\mathcal{C}\cup \{p\}\mid \pi(c)\geq j\}$ lie consecutively in $\mathcal{L}$.
\end{lemma}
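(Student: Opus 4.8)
The plan is to prove the equivalence directly from the definition of ``coincident'', using the elementary reformulation that a set $S\subseteq\mathcal{C}\cup\{p\}$ fails to lie consecutively in $\mathcal{L}$ precisely when there exist $c_1,c_2\in S$ and $d\in(\mathcal{C}\cup\{p\})\setminus S$ with $\mathcal{L}(c_1)<\mathcal{L}(d)<\mathcal{L}(c_2)$; here we may always name $c_1,c_2$ so that $\mathcal{L}(c_1)<\mathcal{L}(c_2)$. For brevity write $U_j:=\{c\in\mathcal{C}\cup\{p\}\mid \pi(c)\geq j\}$ for the $j$-th upper contour set of $\pi$.

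\emph{Necessity} ($\Rightarrow$). Suppose $\pi$ is coincident with $\mathcal{L}$, fix $j$ with $1\leq j\leq|\mathcal{C}|$, and assume toward a contradiction that $U_j$ does not lie consecutively in $\mathcal{L}$. Then there are $c_1,c_2\in U_j$ and $d\in(\mathcal{C}\cup\{p\})\setminus U_j$ with $\mathcal{L}(c_1)<\mathcal{L}(d)<\mathcal{L}(c_2)$. Since $d\notin U_j$ we have $\pi(d)<j\leq\pi(c_2)$, so $\pi(c_2)>\pi(d)$; applying the defining implication of ``coincident'' to the triple $(c_1,d,c_2)$, which satisfies $\mathcal{L}(c_1)<\mathcal{L}(d)<\mathcal{L}(c_2)$, yields $\pi(d)>\pi(c_1)$. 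This contradicts $\pi(c_1)\geq j>\pi(d)$. Hence every such $U_j$ lies consecutively in $\mathcal{L}$.

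\emph{Sufficiency} ($\Leftarrow$). Suppose every $U_j$ lies consecutively in $\mathcal{L}$, and let $a,b,c$ be distinct with $b$ lying strictly between $a$ and $c$ along $\mathcal{L}$ (i.e.\ $\mathcal{L}(a)<\mathcal{L}(b)<\mathcal{L}(c)$ or $\mathcal{L}(c)<\mathcal{L}(b)<\mathcal{L}(a)$) and with $\pi(c)>\pi(b)$; we must show $\pi(b)>\pi(a)$. If not, then, since $\pi$ is a bijection, $\pi(a)>\pi(b)$. Put $j:=\pi(b)+1$; since $\pi(a)$ and $\pi(c)$ are two distinct values both exceeding $\pi(b)$, we have $\pi(b)\leq|\mathcal{C}|-1$ and hence $1\leq j\leq|\mathcal{C}|$. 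Now $\pi(a)\geq j$ and $\pi(c)\geq j$, so $a,c\in U_j$, whereas $\pi(b)=j-1<j$, so $b\notin U_j$. Since $b$ lies between $a$ and $c$ along $\mathcal{L}$, $U_j$ does not lie consecutively in $\mathcal{L}$, a contradiction. Therefore $\pi(b)>\pi(a)$, which is exactly the condition for $\pi$ to be coincident with $\mathcal{L}$.

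\emph{Main obstacle.} There is no serious difficulty here; the argument is a short double implication. The points needing care are: (i) translating ``$S$ does not lie consecutively'' into the existence of a candidate of $(\mathcal{C}\cup\{p\})\setminus S$ strictly between two members of $S$ on the axis, and noting that the orientation of $\mathcal{L}$ is immaterial (both clauses in the definition of ``coincident'' merely express that $b$ lies between $a$ and $c$); (ii) matching the inequality directions, since ``coincident'' is an implication between strict $\pi$-inequalities rather than a monotonicity statement; and (iii) choosing the threshold $j=\pi(b)+1$ so that $b$ is \emph{just} excluded from $U_j$ while $a$ and $c$ stay in, together with the brief check that $\pi(b)\leq|\mathcal{C}|-1$ so this $j$ falls in the claimed range.
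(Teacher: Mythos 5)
Your proof is correct. Note that the paper itself gives no argument for this lemma at all --- it simply cites it as Theorem~1 of an earlier reference on single-peaked preferences --- so there is no in-paper proof to compare against; your two-directional argument is a clean, self-contained replacement. Both directions check out: in the necessity direction the triple $(c_1,d,c_2)$ with $\mathcal{L}(c_1)<\mathcal{L}(d)<\mathcal{L}(c_2)$ is exactly the configuration to which the definition of ``coincident'' applies, and the hypothesis $\pi(c_2)>\pi(d)$ (from $d\notin U_j$, $c_2\in U_j$) forces $\pi(d)>\pi(c_1)$, contradicting $c_1\in U_j$. In the sufficiency direction the choice $j=\pi(b)+1$ is the right threshold, and your check that $\pi(b)\leq|\mathcal{C}|-1$ (because $\pi(a)$ and $\pi(c)$ are two distinct values above $\pi(b)$ in $[|\mathcal{C}|+1]$) correctly places $j$ in the range $1\leq j\leq|\mathcal{C}|$ demanded by the statement. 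The observation that ``betweenness'' in $\mathcal{L}$ is orientation-free, so both clauses of the coincidence definition collapse to one case, is also handled properly.
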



The following observation directly follows from the above lemma.

\begin{observation}\label{obs:cons}
Let $\pi$ be a vote with $p$ as the peak. $\mathcal{L}$ is the harmonious order of the given election and $\mathcal{C}_L, \mathcal{C}_R$ are defined as above. Then, $\pi$ is coincident with $\mathcal{L}$ if and only if for all $l_i, l_{i'}\in \mathcal{C}_L$ (resp. $r_j,r_{j'}\in \mathcal{C}_R$), $i<i'$ (resp. $j<j'$) implies $\pi(l_i)>\pi(l_{i'})$ (resp. $\pi({r_j})>\pi(r_{j'})$).
\end{observation}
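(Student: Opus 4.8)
The plan is to derive the observation directly from Lemma~\ref{lem:consecutive} by recognizing that its right-hand condition is precisely the assertion that the sequence of $\pi$-values, read off along the harmonious order $\mathcal{L}$, is unimodal with its unique peak at $p$. Recall the $\mathcal{L}$-ordering $(l_a,\dots,l_1,p,r_1,\dots,r_b)$, in which $l_1$ (resp. $r_1$) is the $\mathcal{L}$-neighbour of $p$ and the index grows with distance from $p$ on each side. Thus the pair of conditions ``$i<i'\Rightarrow \pi(l_i)>\pi(l_{i'})$'' and ``$j<j'\Rightarrow \pi(r_j)>\pi(r_{j'})$'' says exactly that $\pi$ strictly increases along $\mathcal{L}$ up to $p$ and strictly decreases afterwards; since $p$ is the peak, $\pi(p)$ is the global maximum, so the two chains are compatible and together describe a profile peaking at $p$. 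The proof then amounts to matching this unimodality against the superlevel-set characterization of Lemma~\ref{lem:consecutive}.

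For the ``if'' direction I would assume the two monotonicity chains and fix any threshold $j$. Because the $\pi$-profile is unimodal with maximum at $p$, the superlevel set $S_j:=\{c\in\mathcal{C}\cup\{p\}\mid \pi(c)\geq j\}$ consists of $p$ together with an initial run $l_1,\dots,l_k$ of its left neighbours and an initial run $r_1,\dots,r_{k'}$ of its right neighbours (monotone decrease away from $p$ guarantees these are prefixes). Such a set forms an interval of the $\mathcal{L}$-order and hence lies consecutively in $\mathcal{L}$. Since this holds for every $j$, Lemma~\ref{lem:consecutive} yields that $\pi$ is coincident with $\mathcal{L}$.

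For the ``only if'' direction I would assume $\pi$ is coincident and argue by contradiction. Take $l_i,l_{i'}$ with $i<i'$, so that $\mathcal{L}(l_{i'})<\mathcal{L}(l_i)<\mathcal{L}(p)$, i.e. $l_i$ lies strictly $\mathcal{L}$-between $l_{i'}$ and $p$. If we had $\pi(l_i)<\pi(l_{i'})$, then choosing the threshold $j:=\pi(l_{i'})$ places both $l_{i'}$ and $p$ (the global maximum) into $S_j$ while excluding the intermediate candidate $l_i$; this $S_j$ is not consecutive in $\mathcal{L}$, contradicting Lemma~\ref{lem:consecutive}. Hence $\pi(l_i)>\pi(l_{i'})$, and the symmetric argument on $\mathcal{C}_R$ completes the direction. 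Alternatively, this direction is immediate from Observation~\ref{obs:singlevote} applied with $p$ as the peak, which already gives $\pi(l_{i'})<\pi(l_i)$ whenever $\mathcal{L}(l_{i'})<\mathcal{L}(l_i)<\mathcal{L}(p)$.

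The main obstacle, such as it is, lies in the ``only if'' direction: one must keep straight that a smaller index means closer to $p$, so that the relevant betweenness relation is $\mathcal{L}(l_{i'})<\mathcal{L}(l_i)<\mathcal{L}(p)$ and a failure of monotonicity produces a superlevel set that straddles $p$ with a gap at exactly the intermediate candidate. Once the threshold $j=\pi(l_{i'})$ is chosen, the non-consecutiveness is forced and Lemma~\ref{lem:consecutive} closes the case; the ``if'' direction is then routine, resting only on the elementary fact that the superlevel sets of a unimodal sequence are intervals.
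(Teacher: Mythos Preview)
Your proposal is correct and follows exactly the route the paper intends: the paper states only that the observation ``directly follows from the above lemma'' (Lemma~\ref{lem:consecutive}), and your argument supplies precisely those details, matching the unimodality of the $\pi$-profile along $\mathcal{L}$ against the superlevel-set characterization. The alternative you note for the ``only if'' direction via Observation~\ref{obs:singlevote} is also valid and equally in the paper's spirit.
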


%

\subsection{Algorithm for UBM1SP}
For two distinct candidates $c,c'\in \mathcal{C}\cup\{p\}$, we say $c$ and $c'$ are neighbors if $|\mathcal{L}(c)-\mathcal{L}(c')|=1$. Let $N(c)$ denote the set of neighbors of $c$. Clearly, every candidate has at most two neighbors. A {\it block} is a subset of candidates lying consecutively in $\mathcal{L}$. For a block $S\subseteq \mathcal{C}\cup\{p\}$, let $N(S):=\{c\in (\mathcal{C}\cup\{p\})\setminus S\mid \exists c'\in S~\text{with}~c\in N(c')\}$. It is easy to verify that $|N(S)|\leq 2$ for every block $S$.

Note that the polynomial-time algorithm for the general Borda manipulation problem with one manipulator proposed in \cite{BARTHOLDI89} cannot be directly used for solving UBM1SP, since UBM1SP requires that the manipulator's vote should be single-peaked. Our polynomial-time algorithm for UBM1SP is a slightly modified version of the algorithm in~\cite{BARTHOLDI89}. Basically, the algorithm places candidates one-by-one in the positions of the manipulator's vote, from the highest to the lowest. The currently highest, unoccupied position of the vote is called the ``next free position''. The details of the algorithm is as follows: (1) Place $p$ in the highest position; (2) Set $S=\{p\}$, where the block $S$ is used to store all candidates which have been placed in some positions by the manipulator; (3) If none of $N(S)$ can be ``safely'' placed in the next free position, then return ``No''. A candidate $c$ can be safely placed in the $j$-th position of the manipulator if $SC_\mathcal{V}(c)+j-1<SC_{\mathcal{V}}(p)+|\mathcal{C}|$. The final score of $p$ is clearly $SC_{\mathcal{V}}(p)+|\mathcal{C}|$. (4) Otherwise, place a candidate $c\in N(S)$, which can be safely placed in the next free position, in the next free position and set $S:=S\cup \{c\}$. If $S=\mathcal{C}\cup \{p\}$, return ``Yes'', otherwise, go back to step (3).

Clearly, the above algorithm needs $O(m)$ time, where $m$ is the number of candidates: Each iteration extends the block $S$ by adding one new candidate to $S$. Since $S$ can be extended at most $|\mathcal{C}|$ times and each iteration takes constant time, the total time is bounded by $O(m)$. The correctness of the above algorithm is easy to verify by Lemma \ref{lem:ob2} and Lemma \ref{lem:consecutive}.
\subsection{Algorithm for UBM2SP}
The algorithm for UBM2SP is similar to the one for UBM1SP: Greedily fill free positions of both manipulators until no free position remains or no candidate can be safely placed in a next free position.
\smallskip

\noindent{\textbf{Main Idea:}} Let $\pi_1$ and $\pi_2$ be the votes of the two manipulators $v_1$ and $v_2$. We use $S_1$ and $S_2$ to denote the sets of candidates that are already placed in some positions of $\pi_1$ and $\pi_2$, respectively. At the beginning, $S_1=S_2=\{p\}$ and we then iteratively extend $S_1$ and $S_2$ by placing candidates in $\pi_1$ and $\pi_2$. After each iteration, we will assure the invariant that $S_1=S_2$ and $S_1$ is a block of $\mathcal{L}$. In each iteration, define $S=S_1=S_2$ and let $l$ and $r$ be the two neighbors of $S$ in $\mathcal{L}$. In order to keep $\pi_1$ and $\pi_2$ single-peaked, there are only two cases to consider: First, both next free positions of $\pi_1$ and $\pi_2$ are occupied by one of $l$ and $r$; Second, $l$ is placed in the next free position of one of $\pi_1$ and $\pi_2$, and $r$ is placed in the next free position of the other. If we have $c\in \{l,r\}$ satisfying $SC_{\mathcal{V}}(c)+2\alpha<SC_{\mathcal{V}}(p)+2|\mathcal{C}|$, where $\alpha$ is the score contributed by one of $\pi_1$ and $\pi_2$ with placing $c$ in its next free position, then we prefer the first case, set $S_1:=S_1\cup \{c\}$ and $S_2:=S_2\cup \{c\}$, and proceed with the next iteration. Otherwise, we check whether the second case is ``safe'', that is, $SC_{\mathcal{V}}(r)+\alpha<SC_{\mathcal{V}}(p)+2|\mathcal{C}|$ and $SC_{\mathcal{V}}(l)+\alpha< SC_{\mathcal{V}}(p)+2|\mathcal{C}|$. If not, then the given instance is a false-instance; otherwise, $S_1:=S_1\cup \{l\}$ and $S_2:=S_2\cup \{r\}$. Next, we have to restore the invariant that $S_1=S_2$. Observe that $S_1$, $S_2$, and $S_1\cap S_2$ all are blocks in $\mathcal{L}$. Moreover, $S_1\setminus S_2$ and $S_2\setminus S_1$ form two blocks and both are neighbors to $S_1\cap S_2$. We apply here another iteration to consider the candidates in $S_1\setminus S_2$ (or $S_2\setminus S_1$) one-by-one. For each of them in $S_1\setminus S_2$ (or $S_2\setminus S_1$), we place it in the highest safe position in $\pi_2$ (or $\pi_1$) and fill the ``gaps'' that are consecutive free positions with candidates not in $S_1\cup S_2$. More details refer to Step 2.4 of the algorithm ALGO-UBM2SP.

For simplicity, we define some new notations. 
For a candidate $c$ and an integer $s$ with $1\leq s\leq |\mathcal{C}|$, we use $c\rightarrow s$ (or $c\not\rightarrow s$) to denote that $c$ can (or cannot) be safely placed in the $(|\mathcal{C}|+1-s)$-th position of some manipulator. We also use $c\rightarrow \{s_1,s_2\}$ (or $c\not\rightarrow \{s_1,s_2\}$) to denote that $c$ can (or cannot) be safely placed in the $(|\mathcal{C}|+1-s_1)$-th position of one manipulator and in the $(|\mathcal{C}|+1-s_2)$-th position of the other, simultaneously.
For a block $S_i$ corresponding to a manipulator $v_i$ with $i=1,2$ and a candidate $c\in N(S_i)$, we use $extend(S_i,c)$ to denote the following operations: (1) place $c$ in the next free position of $\pi_i$, that is, set $\pi_i(c)=|\mathcal{C}|+1-|S_i|$; and (2) extend $S_i$ with $S_i:=S_i\cup \{c\}$.
\smallskip

\noindent{\textbf{The algorithm ALGO-UBM2SP for UBM2SP:}}\vspace{-5pt}
\begin{description}
\itemsep=-2pt
\item[Step 1] Both manipulators place $p$ in their highest positions. Set $S_1=S_2=\{p\}$.
\item[Step 2.] While $S_1=S_2\neq \mathcal{C}\cup \{p\}$, do let $S=S_1$
\item[Step 2.1]  If $\exists c\in N(S)$ with $c\rightarrow \{|S|,|S|\}$, then $extend(S_1,c)$ and $extend(S_2,c)$;
\item[Step 2.2]  Else if $|N(S)|=1$ and $N(S)\not\rightarrow \{|S|,|S|\}$, then return ``No'';\hspace{2.5cm} $\backslash*$ {\sf{Comments: In the case $N(S)=\{c\}$, $c$ is the only candidate which can be placed in the next free positions without destroying the single-peakedness. Thus, if $c\not\rightarrow \{|S|, |S|\}$}, return ``No''} $*\backslash$
\item[Step 2.3]  Else, let $\{c,c'\}=N(S)$. If $c\rightarrow |S_1|$ and $c'\rightarrow |S_2|$, then $extend(S_1,c)$ and $extend(S_2,c')$, otherwise, return ``No''.
\item[Step 2.4] While $S_1\neq S_2$, do
\item[Step 2.4.1] Let $c$ be any candidate in $N(S_1\cap S_2)$ that has already been placed in a position by only one manipulator $v\in \{v_1,v_2\}$;
\item[Step 2.4.2] While $c\not\rightarrow |S_z|$, where $z=1$ if $v=v_2$ and $z=2$ if $v=v_1$, do
\item[Step 2.4.2.1] If $N(S_z)\setminus \{c\}=\emptyset$, return ``No'';
\item[Step 2.4.2.2] Else, let $c'=N(S_z)\setminus \{c\}$; if $c'\rightarrow |S_z|$, then $extend(S_z,c')$; otherwise, return ``No'';
\item[\textcolor{white}{Step}] \textbf{End While 2.4.2}
\item[Step 2.4.3] $extend(S_z,c)$.
\item[\textcolor{white}{Step}] \textbf{End While 2.4}
\item[\textcolor{white}{Step}] \textbf{End While 2}
\item[Step 3] Return ``Yes''.
\end{description}

To show the correctness of the algorithm, we need the following lemma.
For a vote $\pi$ and two integers $x\leq x'$ with $x,x'\in [|\mathcal{C}|+1]$, let $\mathcal{C}_{\pi}(x,x'):=\{c\in \mathcal{C}\cup \{p\}\mid x\leq \pi(c)\leq x'\}$. Let \[\mathcal{C}(\mathcal{L},-c)=:
\begin{cases}
\mathcal{C}_L & ~\text{if}~c\in \mathcal{C}_R\\
\mathcal{C}_R & ~\text{if}~c\in \mathcal{C}_L
\end{cases}\]

\begin{lemma}\label{lem:last}
Let $\{\pi_1,\pi_2\}$ be a solution for UBM2SP, and $c\in\mathcal{C}$ be a candidate with $\pi_1(c)=x$ and $\pi_2(c)=y$. If there are two integers $x',y'$ such that (1) $x'>x,y'>y$; (2) $SC_{\mathcal{V}}(c)+x'+y'-2< SC_{\mathcal{V}}(p)+2|\mathcal{C}|$; (3) $\mathcal{C}_{\pi_1}(x+1,x')\subseteq \mathcal{C}(\mathcal{L},-c),\; \mathcal{C}_{\pi_2}(y+1,y')\subseteq \mathcal{C}(\mathcal{L},-c)$, then, the following two votes $\pi_1', \pi_2'$ with
\[
\pi_1'(c')=
\begin{cases}
\pi_1(c') & ~\text{if}~ \pi_1(c')>x' ~\text{or}~ \pi_1(c')<x\\
x' &~\text{if}~ c'=c\\
j & ~\text{if}~ x< \pi_1(c')=j+1\leq x'
\end{cases}
\]
\[
\pi_2'(c')=
\begin{cases}
\pi_2(c') & ~\text{if}~ \pi_2(c')>y' ~\text{or}~ \pi_2(c')<y\\
y' &~\text{if}~ c'=c\\
j & ~\text{if}~ y< \pi_2(c')=j+1\leq y'
\end{cases}
\]
must be another solution.
\end{lemma}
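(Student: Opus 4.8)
The plan is to verify the three defining properties of a UBM2SP solution for the pair $(\pi_1',\pi_2')$: that each is a genuine bijection from $\mathcal{C}\cup\{p\}$ to $[|\mathcal{C}|+1]$, that each is coincident with $\mathcal{L}$, and that $p$ still beats every candidate. The first point is essentially bookkeeping: in passing from $\pi_1$ to $\pi_1'$ we only permute the positions in the window $[x,x']$ — $c$ is moved up from $x$ to $x'$, and everything that $\pi_1$ placed strictly between $x+1$ and $x'$ is shifted down by one — so $\pi_1'$ is a bijection, and likewise for $\pi_2'$ on the window $[y,y']$. Positions outside these windows are untouched.

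The second point, single-peakedness, is where the hypothesis (3) does the work, and I expect this to be the main obstacle. By Observation~\ref{obs:cons}, a vote with peak $p$ is coincident with $\mathcal{L}$ exactly when it respects the relative order of $\mathcal{C}_L$ (higher position for smaller index) and, independently, of $\mathcal{C}_R$. Condition (3) says that every candidate $\pi_1$ placed in the window $(x,x']$ lies on the \emph{opposite} side of $p$ from $c$; hence when we cyclically rotate $c$ past this block, we never swap the relative order of two candidates on the same side of $\mathcal{L}$ — $c$ is only ever compared against candidates from the other side, for which no constraint exists. So $\pi_1'$ is still coincident with $\mathcal{L}$; the same argument applies to $\pi_2'$ via the window $(y,y']$. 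I would also need to confirm $p$ remains the peak, which is immediate since $p$'s position (the top) is outside both windows.

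For the third point, I first observe that the scores of all candidates other than $c$ are unchanged: any $c'\neq c$ either keeps its position in both votes, or is shifted down by exactly one in one window while possibly also shifted in the other — but a candidate shifted down in $\pi_1$'s window lies in $\mathcal{C}(\mathcal{L},-c)$ and so cannot simultaneously lie in $\pi_2$'s window, which sits on $c$'s side; thus each such $c'$ loses exactly one point in total, only \emph{decreasing} its score, which preserves $p$'s margin over it. Here I should double-check the degenerate possibility that some $c'$ is shifted in neither window (fine) — the key is that no $c'\neq c$ has its score \emph{increased}. Finally, for $c$ itself, condition (2) states $SC_{\mathcal{V}}(c)+x'+y'-2 < SC_{\mathcal{V}}(p)+2|\mathcal{C}|$; since $SC_{\mathcal{V}\cup\mathcal{V}'}(c) = SC_{\mathcal{V}}(c) + (\pi_1'(c)-1) + (\pi_2'(c)-1) = SC_{\mathcal{V}}(c) + (x'-1)+(y'-1)$ and $p$'s final score is still $SC_{\mathcal{V}}(p)+2|\mathcal{C}|$ (both manipulators keep $p$ on top), this is exactly the inequality needed for $p$ to beat $c$. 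Combining the three points, $(\pi_1',\pi_2')$ is a solution, as claimed. The bulk of the care goes into the disjointness reasoning between the two windows and into spelling out the Observation~\ref{obs:cons} argument for why moving $c$ across same-side candidates never happens.
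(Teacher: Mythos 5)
Your proof is correct and follows essentially the same route as the paper's: move $c$ upward through the window by successive transpositions, use condition (3) together with Observation~\ref{obs:cons} to see that passing $c$ over opposite-side candidates preserves coincidence with $\mathcal{L}$, note that no other candidate's score increases, and invoke condition (2) to bound $c$'s new score against $p$'s. One inessential slip: condition (3) places \emph{both} windows $\mathcal{C}_{\pi_1}(x+1,x')$ and $\mathcal{C}_{\pi_2}(y+1,y')$ inside the same set $\mathcal{C}(\mathcal{L},-c)$, so a candidate may be shifted down in both votes (losing two points rather than exactly one); this does not harm the argument, since all that is needed is that no candidate other than $c$ gains score.
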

\begin{proof}
We can transfer $\pi_1$ to $\pi_1'$ by continually switching the position of $c$ and the position of the candidate $c'$ with $\pi_1(c')=\pi_1(c)+1$, until $c$ is moved to the $x'$-th position. Since $c'\in \mathcal{C}(\mathcal{L},-c)$, due to Observation \ref{obs:cons}, each switching will not change a vote which is coincident with $\mathcal{L}$ to a vote which is not coincident with $\mathcal{L}$. By the same way, we can transfer $\pi_2$ to $\pi_2'$ such that $\pi_2'$ is coincident with $\mathcal{L}$. Since $SC_{\mathcal{V}}(c)+x'+y'-2< SC_{\mathcal{V}}(c)+2|\mathcal{C}|$, and each switch does not increase the total scores of any other candidates, $\{\pi_1',\pi_2'\}$ must be a solution.
\end{proof}

We now come to show the correctness of the algorithm.
We say two partial votes $\pi_1, \pi_2$ with domain $S_1, S_2$ with $|S_1|\leq |S_2|$, respectively, are {\it extendable} to a solution, if there is a solution $\{\pi_1',\pi_2'\}$ such that $\{\pi_1'^{-1}(j),\pi_2'^{-1}(j)\}=\{\pi_1^{-1}(j),\pi_2^{-1}(j)\}$ for all $|\mathcal{C}|+2-|S_1|\leq j\leq |\mathcal{C}|+1$ and $\pi_2^{-1}(j)\in \{\pi_1'^{-1}(j),\pi_2'^{-1}(j)\}$ for all $|\mathcal{C}|+2-|S_2|\leq j <|\mathcal{C}|+2-|S_1|$.

\begin{lemma}
ALGO-UBM2SP solves UBM2SP correctly in $O(m)$ time, where $m$ is the number of candidates.
\end{lemma}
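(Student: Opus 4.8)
The plan is to prove two things: (i) that \textbf{ALGO-UBM2SP} runs in $O(m)$ time, and (ii) that it is correct, i.e., it outputs ``Yes'' if and only if the given UBM2SP instance is a true-instance. The time bound is the easy direction: by Step~1 and Step~2 the outer ``while'' loop (Step~2) strictly enlarges the common block $S = S_1 = S_2$ each time it completes a full pass, so it executes at most $|\mathcal{C}| = m-1$ times; similarly, the inner loop in Step~2.4 and the inner loop in Step~2.4.2 each place at least one new candidate into some $\pi_i$ per execution, and across the whole run there are only $2m$ positions to fill in total, so the total work done by all inner loops is $O(m)$ as well. Since each atomic step ($extend$, a safety check of the form $c\rightarrow s$, membership in $N(S)$, etc.) takes constant time after $O(m)$ preprocessing of $\mathcal{L}$ and the scores $SC_{\mathcal{V}}(\cdot)$, the overall running time is $O(m)$.

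For correctness I would argue by a loop invariant. The key invariant to maintain is: \emph{after each full iteration of the outer loop (Step~2), the partial votes $\pi_1,\pi_2$ built so far are extendable to a solution} (in the sense of the ``extendable to a solution'' definition just stated), \emph{provided the original instance is a true-instance}. Soundness (if the algorithm outputs ``Yes'' then a solution exists) is immediate: the algorithm only ever places a candidate $c$ in a position when the relevant safety condition ($c\rightarrow s$ or $c\rightarrow\{s,s\}$) holds, so when it terminates with $S_1 = S_2 = \mathcal{C}\cup\{p\}$ every candidate's total score is strictly below $SC_{\mathcal{V}}(p)+2|\mathcal{C}|$, which by Lemma~\ref{lem:ob2} is exactly $SC_{\mathcal{V}\cup\mathcal{V}'}(p)$; and by Observation~\ref{obs:cons} the single-peakedness of $\pi_1,\pi_2$ is preserved because we only ever extend a block by one of its (at most two) neighbors. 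The substance is completeness: assuming a solution exists, I must show the algorithm never returns ``No''. This is where Lemma~\ref{lem:last} does the work — it is an exchange argument. At each decision point, if the algorithm makes a greedy choice (prefer Step~2.1, the ``both manipulators take the same neighbor'' move, whenever safe), I take an arbitrary solution $\{\pi_1',\pi_2'\}$ extending the current partial votes and use Lemma~\ref{lem:last} to rearrange it, pushing the candidate the algorithm did \emph{not} pick to higher positions in both votes (condition (3) of Lemma~\ref{lem:last} is satisfied precisely because the two pending neighbors $l$ and $r$ lie on opposite sides of the peak $p$, so each lies in $\mathcal{C}(\mathcal{L},-\,\cdot)$ of the other), thereby showing that some solution agrees with the algorithm's choice; hence the invariant is restored. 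The same exchange argument handles Step~2.3 (the ``split'' move, $l$ to one manipulator, $r$ to the other) and the repair phase in Step~2.4, where $S_1\setminus S_2$ and $S_2\setminus S_1$ are opposite-side blocks adjacent to $S_1\cap S_2$, again landing us in the hypotheses of Lemma~\ref{lem:last}.

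The main obstacle, and the step I would spend the most care on, is the repair phase \textbf{Step~2.4} and its nested loop \textbf{Step~2.4.2}: after a ``split'' move the two blocks $S_1,S_2$ are no longer equal, and one has to show that greedily placing each candidate of $S_1\setminus S_2$ (resp.\ $S_2\setminus S_1$) in the highest safe position of the lagging manipulator — filling intervening ``gaps'' with not-yet-placed candidates from $N(S_z)$ — cannot fail if a solution existed before the split. The delicate point is that the ``gaps'' must be filled by candidates from the correct side of $\mathcal{L}$ for single-peakedness (this is why Step~2.4.2.2 only ever reaches for $c' = N(S_z)\setminus\{c\}$), and one must verify that condition (2) of Lemma~\ref{lem:last} — the combined-score bound $SC_{\mathcal{V}}(c) + x' + y' - 2 < SC_{\mathcal{V}}(p) + 2|\mathcal{C}|$ — is exactly what the safety tests $c\rightarrow |S_z|$ in the algorithm enforce, so that a failure of the algorithm (a ``No'' in Step~2.4.2.1 or~2.4.2.2) genuinely certifies that no solution extends the current partial votes, and hence, by the invariant, that no solution existed at all. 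Once Step~2.4 is pinned down, the outer-loop cases (Steps~2.1–2.3) follow the same template and the proof closes.

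\begin{proof}
(Deferred to the long version; see the proof sketch above.)
\end{proof}
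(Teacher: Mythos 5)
Your proposal is correct and follows essentially the same route as the paper: the same ``extendable to a solution'' loop invariant, the same exchange argument via Lemma~\ref{lem:last} (which the paper merely packages into Observations~\ref{obs:1}--\ref{obs:3}), and the same running-time count based on each loop execution extending $S_1$ or $S_2$, of which there are $O(m)$ in total. The only nit is the wording that the exchange ``pushes the candidate the algorithm did not pick to higher positions'' --- Lemma~\ref{lem:last} in fact moves the \emph{chosen} candidate up to the next free position (displacing the intervening opposite-side candidates downward) --- but this does not affect the substance of the argument.
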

\begin{proof}[Sketch]
Firstly, we consider the correctness. In each step, the algorithm extends $S_1$ (or $S_2$) by adding a candidate from $N(S_1)$ (or $N(S_2)$) to $S_1$ (or $S_2$), which makes $S_1$ (or $S_2$) always be a block. Due to Lemma \ref{lem:consecutive}, if the algorithm returns ``Yes'', the two votes $\pi_1$ and $\pi_2$ must be coincident with $\mathcal{L}$. 
We need the following observations.

\begin{observation}\label{obs:1} If at some point $S_1=S_2=S$, $\pi_1,\pi_2$ are extendable to a solution, and a candidate $c\in N(S)$ satisfies $c\rightarrow \{|S|,|S|\}$, then the partial votes $\pi_1'$ with $\pi_1'(c')=\pi_1(c')$ for $c'\in S$ and $\pi_1'(c)=|\mathcal{C}|+1-|S|$  and $\pi_2'$ with $\pi_2'(c')=\pi_2(c')$ for $c'\in S$ and $\pi_2'(c)=|\mathcal{C}|+1-|S|$ are extendable to a solution.
\end{observation}

\begin{observation}\label{obs:2} If at some point $S_1=S_2=S$ with $N(S)=\{l,r\}$, $\pi_1, \pi_2$ are extendable to a solution, no candidate $c\in N(S)$ satisfies $c\rightarrow \{|S|,|S|\}$,  and $l\rightarrow |S|$ and $r\rightarrow |S|$, then, the partial votes $\pi_1'$ with $\pi_1'(c')=\pi_1(c')$ for $c'\in S$ and $\pi_1'(l)=|\mathcal{C}|+1-|S|$ and $\pi_2'$ with $\pi_2'(c')=\pi_2(c')$ for $c'\in S$ and $\pi_2'(r)=|\mathcal{C}|+1-|S|$ are extendable to a solution.
\end{observation}

\begin{observation}\label{obs:3} If at some point two partial votes $\pi_1$ and $\pi_2$ with domains $S_1$ and $S_2$, respectively, are extendable to a solution and there is a candidate $c$ which has been placed in some position in $\pi_{i}$ (but has not been placed by the other vote, that is, $\pi_{3-i}$), where $i=1$ or $i=2$, then, the partial votes $\pi_i$ and  $\pi'$ with $\pi'(c')=\pi_{_{3-i}}(c')$ for $c'\in S_{3-i}$, and $\pi'(c)=\min\{SC_\mathcal{V}(p)+2|\mathcal{C}|-SC_{\mathcal{V}}(c)-\pi_i(c)+1, |\mathcal{C}|+1-|S_{3-i}|\}$ are extendable to a solution.
\end{observation}

The correctness of the above observations follows from Lemma \ref{lem:last}.

Due to Lemma \ref{lem:ob2}, Step 1 is correct. Then, we consider the while loop in Step 2. Due to Observation \ref{obs:1}, if Step 2.1 was executed and the given instance is a true-instance, then, the partial votes after the execution of Step 2.1 must be extendable to a solution. Otherwise, if Step 2.1 was not executed, and $N(S)$ contains only one candidate, then the given instance must be a false-instance as described in the comments of Step 2.2. Finally, if the $extend$ operations in Step 2.3 was executed, then, due to Observation \ref{obs:2}, these operations are correct. If Step 2.3 returns ``No'', then the given instance must be a false-instance. The reason for this is that, due to Lemma \ref{lem:consecutive} and Observation \ref{obs:2}, if the given instance is a true-instance, then $\pi_1,\pi_2$ must be extendable to a solution and thus, the next free positions must be occupied by $N(S)$. If the conditions of Steps 2.1 to 2.3 are not satisfied, the next free positions can not be safely placed, then we can return ``No''. After the execution of $extend$ operations in Step 2.3, $S_1\neq S_2$, and the algorithm then goes to the while loop in Step 2.4. Since $S_1\cap S_2\neq \emptyset$ (at least $p\in S_1\cap S_2$) and both $S_1$ and $S_2$ are blocks (this is easy to check by the definition of $extend$), there must be at least one candidate $c\in N(S_1\cap S_2)$ which has already been placed in a position by a manipulator $v\in \{v_1,v_2\}$. Due to Observation \ref{obs:3}, if the given instance is a true-instance, then we can place $c$ in the position $y=\min\{SC_\mathcal{V}(p)+2|\mathcal{C}|-SC_{\mathcal{V}}(c)-\pi_i(c)+1, |\mathcal{C}|+1-|S_z|\}$, where $i=1, z=2$ if $v=v_1$ and $i=2, z=1$ if $v=v_2$, to get partial votes which are extendable to a solution. Due to Lemma \ref{lem:consecutive}, all free positions higher than the $y$-th position can only fixed by candidates from $\mathcal{C}(\mathcal{L},-c)\setminus S_z$, and each of these free positions is fixed by an unique candidate from $\mathcal{C}(\mathcal{L},-c)\setminus S_z$. These are exactly what the while loop in 2.4.2 does: fixing the free positions which are higher than the $y$-th position one-by one. After the loop, Step 2.4.3 places $c$ in the $y$-th position. However, if $|\mathcal{C}(\mathcal{L},-c)\setminus S_z|$ is less than the number of free positions which are higher than the $y$-th position, then due to Lemma \ref{lem:consecutive}, the given instance must be a false-instance. Due to the above analysis, each case of Steps 2.1 to 2.4 correctly extends the partial votes to two new partial votes which are also extendable to a solution, or correctly returns ``No''.

To analyze the running time, we need to consider how many times Steps 2.1, 2.3, 2.4.1, 2.4.2.2, 2.4.3 are executed. For simplicity, let $n_\lambda$ denote the number of times that Step $\lambda$ is executed throughout the algorithm. It is clear that $n_{2.4.1}=n_{2.4.3}$. Since each execution of Steps 2.1, 2.3, 2.4.2.2, 2.4.3 extends at least one of $S_1$ and $S_2$ by adding a new candidate to the block or terminates the algorithm, and both $S_1$ and $S_2$ can be extended at most $2|\mathcal{C}|$ times, we conclude that $n_{2.1}+n_{2.3}+n_{2.4.1}+n_{2.4.2.2}+n_{2.4.3}= O(|\mathcal{C}|)$. Since each execution of these Steps needs constant time\footnote{To calculate $N(S_1\cap S_2)$ in Step 2.4.1 in constant time, we need some proper date structure like double-linked queue.}, we arrive at a total running time of $O(|\mathcal{C}|)$.
\end{proof}

\section{Conclusion and Future Work}
In this paper, we initiate the study of exact combinatorial algorithms for Borda manipulation problems. We propose two exact combinatorial algorithms with running times $O^*((m\cdot 2^m)^{t+1})$ and $O^*(t^{2m})$ for WBM and UBM, respectively, where $t$ is the number of manipulators and $m$ is the number of candidates in the given election. These results answer an open problem posed by Betzler et al. \cite{DBLP:conf/ijcai/BetzlerNW11}. In addition, we present an integer linear programming based algorithm with running time $O^*(2^{9m^2\log{m}})$ for UBM. Finally, we study UBM under single-peaked elections and propose polynomial-time algorithms for UBM in case of no more than two manipulators. We mention here that all our algorithms can be used with slight modifications for (positional) scoring systems.

 One future direction could be to improve the presented combinatorial algorithms. Parameterized complexity has been proved a powerful tool to handle ${\mathcal{NP}}$-hard problems. As showed here and in \cite{DBLP:conf/ijcai/BetzlerNW11}, UBM is fixed-parameter tractable with respect to the number $m$ of candidates. A challenging work is to improve the running time to $O^*(2^{m\log{m}})$, or even $O^*(c^m)$ with $c$ being a constant.
Furthermore, the complexity of UBM in the case of more than two manipulators under single-peaked elections remains open.
\endgroup
\date{\today}
\bibliographystyle{plain}
\bibliography{bordamanipulations}
\end{document}